\newtheorem{theorem}{Theorem}
\newtheorem{corollary}[theorem]{Corollary}
\newtheorem{lemma}[theorem]{Lemma}
\theoremstyle{definition}
\newtheorem{definition}[theorem]{Definition}
\newtheorem{example}[theorem]{Example}
\newtheorem{remark}[theorem]{Remark}
\newtheorem{openquestion}[theorem]{Open Question}
\newcommand{\is}{{\sc Independent Set}\xspace}
\newcommand{\maxis}{{\sc Maximum Independent Set}\xspace}
\newcommand{\issmall}{{\sc IS}\xspace}
\newcommand{\maxissmall}{{\sc MaxIS}\xspace}
\newcommand{\gpb}{{\sc Group-PB}\xspace}
\newcommand{\mgpb}{{\sc Max-Group-PB}\xspace}
\newcommand{\hpb}{{\sc Hierarchical-PB}\xspace}
\newcommand{\ugpb}{{\sc Utility-Group-PB}\xspace}
\newcommand{\arcsupply}{{\sc Arc Supply}\xspace}
\newcommand{\ddkfull}{{\sc $d$-Dimensional Knapsack}\xspace}
\newcommand{\ddk}{{\sc $d$-DK}\xspace}
\newcommand{\xdk}[1]{{\sc $#1$-DK}\xspace}
\def\calV{\mathcal{V}}
\def\calF{\mathcal{F}}
\def\calE{\mathcal{E}}
\def\calI{\mathcal{I}}
\newlength{\lengthquestion}
\newcommand{\defproblem}[3]{
  \vspace{1mm}
\noindent\fbox{
  \begin{minipage}{0.97\textwidth}
  \begin{tabular*}{\textwidth}{@{\extracolsep{\fill}}lr} #1 \\ \end{tabular*}
  \begin{tabular*}{\textwidth}{p{\lengthquestion}p{0.95\textwidth-\lengthquestion}}
    { \hfill\bf{Input:}}    & #2\\
    {       \bf{Question:}} & #3
  \end{tabular*}
  \end{minipage}
  }
  \vspace{1mm}
}
\DeclareMathOperator*{\cost}{cost}
\DeclareMathOperator*{\bundle}{bundle}
\DeclareMathOperator*{\proj}{proj}
\DeclareMathOperator*{\integer}{int}
\newcommand{\Oh}{\ensuremath{\mathcal{O}}}
\newcommand{\Ohstar}{\ensuremath{\mathcal{O}^*}}
\newcommand{\app}{\ensuremath{\mathtt{app}}}
\newcommand{\ComplexityFont}[1]{{\ensuremath{\mathsf{#1}}}}
\newcommand{\FPT}{\ComplexityFont{FPT}}
\newcommand{\NP}{\ComplexityFont{NP}}
\renewcommand{\P}{\ComplexityFont{P}}
\newcommand{\W}{\ComplexityFont{W}}
\newcommand{\XP}{\ComplexityFont{XP}}
\newcommand{\APX}{\ComplexityFont{APX}}
\title{Participatory Budgeting with Project Groups\footnote{An extended abstract of this work appears in IJCAI 2021~\cite{gpb-ijcai21}.
This is the author's version of the paper published in JCSS~\cite{gpb-jcss}; it differs only in formatting and minor editorial changes.}}
\date{}
\author[1]{Pallavi Jain\thanks{pallavi@iitj.ac.in}}
\author[2]{Krzysztof Sornat\thanks{sornat@agh.edu.pl}}
\author[3]{Nimrod Talmon\thanks{talmonn@bgu.ac.il}}
\author[3]{Meirav Zehavi\thanks{meiravze@bgu.ac.il}}
\affil[1]{Indian Institute of Technology Jodhpur, India}
\affil[2]{AGH University, Poland}
\affil[3]{Ben-Gurion University, Israel}
\begin{document}

\maketitle

\begin{abstract}
We study a generalization of the standard approval-based model of participatory budgeting (PB), in which voters are providing approval ballots over a set of predefined projects and---in addition to a global budget limit, there are several groupings of the projects, each group with its own budget limit. We study the computational complexity of identifying project bundles that maximize voter satisfaction while respecting all budget limits. We show that the problem is generally intractable and describe efficient exact algorithms for several special cases, including instances with only few groups and instances where the group structure is close to be hierarchical, as well as efficient approximation algorithms. Our results could allow, e.g., municipalities to hold richer PB processes that are thematically and geographically inclusive.
\end{abstract}

\section{Introduction}

In the standard approval-based model of participatory budgeting (PB)~\cite{cabannes2004participatory,shah2007participatory}, specifically, the model of \emph{Combinatorial PB}~\cite{aziz2021participatory}, we are given a set of $m$ projects, each with its cost, $n$ approval votes (i.e., each voter provides a subset of projects she approves), and a budget limit. The task is to aggregate the votes to select a bundle (i.e., a subset) of projects that respects the budget limit.
PB has caught quite considerable attention lately~\cite{aziz2018proportionally,talmon2019framework,aziz2021participatory,rey2023computational} as it is being used around the world to decide upon the spending of public (mostly municipal) money.
It is currently a very active topic of research in computational social choice (see, e.g., the recent survey~\cite{rey2023computational}).
Furthermore, this line of research is beginning to have practical impact with municipalities starting to use cutting-edge aggregation algorithms developed by the research community (see, e.g., the Method of Equal Shares~\cite{faliszewski2023participatory}).

Here, we consider a setting of participatory budgeting in which the projects are classified into groups that might be intersecting, and each group comes with its own budget constraint, so that the result of the PB---i.e., the aggregated bundle---shall respect not only the global budget limit, but also the limits of each of the groups.
To make things concrete and formal, below is the decision version of our problem (in the optimization version of \gpb, denoted \mgpb, the goal is to maximize the utility):

\defproblem{\gpb}
{A set $P$ of projects with their cost function $c \colon P \rightarrow \mathbb{N}$,
a set $\calV$ of voters with their approval ballots $\calE=\{P_v\subseteq P \colon v\in \calV\}$, a family of groups of projects $\calF \subseteq 2^P$ with their budget function $b \colon \calF\rightarrow \mathbb{N}$, a global budget limit $B$, and a desired utility value $u$.}
{Is there a set of projects $X \subseteq P$ such that $\sum_{v \in \calV}|P_v \cap X| \geq u$, $\sum_{p\in X}c(p)\leq B$, and for every set $F \in \calF$, $\sum_{p\in F\cap X}c(p)\leq b(F)$?}

Without loss of generality, we assume $b(F) \leq B$ for every $F \in \calF$, and that no two sets in $\calF$ are identical, i.e., $\forall_{S_1,S_2 \in \calF} \quad S_1 \neq S_2$.
Also, without loss of generality, we assume that every project in $P$ is approved by at least one voter.
(Note that voter utility equals the number of approved projects that are funded.
This is the most popular definition of utility in PB.
However, other definitions exist, some of which we discuss later.)

\begin{example}
Let $P=\{p_1,p_2,p_3,p_4\}$. Let the cost of projects be as follows:
$c(p_1)=2,c(p_2)=1, c(p_3)=3, c(p_4)=1$.
Suppose that we have only $2$ voters, say $v,v'$ and $P_v=\{p_1,p_2,p_3\}, P_{v'}=\{p_3,p_4\}$.
Let $\calF = \{F_1=\{p_1,p_3\}, F_2=\{p_2,p_4\}\}$. Let $b(F_1)=3, b(F_2)=2$. Note that we are allowed to take only one project from $F_1$. Let $B=5$ and $u=3$. Let $X=\{p_3,p_4\}$.
Note that $|P_v \cap X|=1$ and $|P_{v'}\cap X|=2$.
Thus, $\sum_{v \in \calV}|P_v \cap X|= 3 =u$.
Further, $c(p_3)+c(p_4)=4\leq B$, and the cost of projects from sets $F_1$ and $F_2$ are $3$ and  $1$, respectively, that is,
$\sum_{p\in F_1}c(p) =3= b(F_1)$ and $\sum_{p\in F_2}c(p)=1 \leq b(F_2)$.
\end{example}

Our work is motivated mainly by the following use-cases:
\begin{enumerate}

\item
\textbf{Geographical budgeting:} Consider a city (say, Paris), consisting of several districts.
To not spend all public funds on projects from, say, only one district, \gpb is useful:
group projects according by districts and select appropriate budget limits (making sure that, e.g., none of Paris's $20$ districts would use more than $10\%$ of the total budget).
(A more fine-grained solution, incorporating neighborhoods, streets, etc., is also possible.
Currently, such geographic inclusiveness is usually achieved ad hoc by holding separate per-district elections).

\item
\textbf{Thematic budgeting:}
Projects usually can be naturally grouped into types, e.g., education projects,  recreational projects, and so on.
\gpb is useful here: group projects accordingly, making sure that not all the budget is being spent on, say, projects of only one type.

\item
\textbf{Non-budgeting use-cases:}
\gpb is useful in contexts other than PB: e.g., to decide which processes to run on a time-limited computing server, where available processes can be naturally grouped into types and it is not desired to use all computing power for, say, processes of only one type.
In this setting, the voters may be different users with different requirements for processes to be finished in a machine run.

\end{enumerate}

\begin{remark}
Throughout the paper we concentrate on different groups and on the topology and structure of such groups. One particular interesting topology is where the groups are hierarchical. This comes naturally in many settings, for example (corresponding to the motivating use-cases described above):
\begin{enumerate}

\item
Geographical arrangements of projects often manifest hierarchy.
It is evident, e.g., in the organization of urban areas into cities, districts, neighborhoods, and streets.

\item
Thematic groupings of projects, such as to recreational, health-related, and education-related projects, often exhibit a hierarchical structure, where overarching categories encompass subcategories, leading to a more granular and organized representation of diverse initiatives.

\item
Organizations commonly adopt a hierarchical structure---being composed of groups, teams, smaller units, to facilitate efficient management and collaboration.

\end{enumerate}
\end{remark}

\begin{table}[t]
    \centering
    \begin{tabular}{l l l}
\toprule
\textbf{Result} & \textbf{Parameters} & \textbf{Reference} \\
\midrule
para$\NP$-hard & $b_{\max}+s+\app+\ell$ & Theorem~\ref{thm:paraNP-l+s+f} \\
para$\NP$-hard & $b_{\max}+s+n+\ell$ & Theorem~\ref{thm:paraNP-l+s+f} \\
$\W[1]$-hard & $b_{\max}+s+\app+u$ & Theorem~\ref{thm:w-hard-app2} \\
$\W[1]$-hard & $b_{\max}+s+n+u$ & Theorem~\ref{thm:w-hard-app2} \\
para$\NP$-hard & $D_g+\ell$ & Theorem~\ref{thm:partition-pb-xp-dg} \\
$\FPT$ & $D_g+s$ & Theorem~\ref{thm:partition-pb-fpt-dg+s} \\
$\FPT$ & $D_p$ & Theorem~\ref{thm:partition-pb-fpt-d} \\
$\XP$ & $g$ &
Theorem~\ref{thm:xp-g} \\
$\FPT$ & $g + u$ &
Theorem~\ref{thm:fpt-u+g} \\
$\FPT$ & $g + n$ &
Theorem~\ref{thm:fpt-g+n} \\
$\FPT$ & $g + B$ &
Theorem~\ref{thm:fpt-g+b} \\
\bottomrule
%
%
    \end{tabular}
    \caption{Parameterized complexity of \gpb wrt.\ the layerwidth $\ell$ (see Definition~\ref{def:layerwidth}), $b_{\max}=\max\{b(F): F \in \cal{F}\}$, $s=\max\{|F|: F \in \calF\}$, the maximum number $\app$ of projects a voter approves, the number $n$ of voters, the required utility $u$, the number $D_g$ ($D_p$) of groups (resp., projects) to delete to get a hierarchical structure, the number of groups $g=|\calF|$, and the total budget $B$.
    }
    \label{tab:pc-results}
\end{table}

\subsection{Our Contribution}

We introduce and study \gpb,
first demonstrating its computational intractability even for some very restricted cases (Theorems~\ref{thm:paraNP-l+s+f} and~\ref{thm:partition-pb-xp-dg}).
Interestingly, \gpb can be solved in polynomial-time if a project belongs to at most one group, but becomes $\NP$-hard as soon as a project can belong to two groups (Theorem~\ref{thm:partition-pb-xp-dg}).
Positively, we show that \gpb can be solved in polynomial-time for a constant number of groups (Theorem~\ref{thm:xp-g}) and
for instances with hierarchical group structure (i.e., any pair of groups must be either non-intersecting or in containment relation; Lemma~\ref{lemma:hier-pb-in-p}).
Note that in some cases, e.g., when grouping projects by geographical regions/districts/neighborhoods, the group structure is indeed hierarchical.

We extend our study to approximation and parameterized algorithms.
We design efficient approximation algorithms for some cases (Theorem~\ref{thm:ptas-for-g-loglog-i}), and consider the following parameters:
  the number of projects ($m$), the number of voters ($n$), the maximum size of an approval set ($\app$), the budget ($B$), the maximum size of a group in the family $\calF$ ($s$), the utility ($u$), the layerwidth ($\ell$) (see Definition~\ref{def:layerwidth} for layerwidth), the size of the family $\calF$ ($g$), and $b_{\max}=\max_{F\in \calF}b(F)$; and obtain both tractability and intractability results.
The motivation for considering these is the following:
  the number of voters, $n$, can be small in cases when we do PB by the council or in a small community;
  $m$ is sometimes quite small (e.g., the PB instances of Stanford Participatory Budgeting Platform\footnote{\url{https://pbstanford.org/}} usually consist of only 10--20 projects);
  $\ell$ and $b_{\max}$ can be set by designer---they are generally not small, yet we add them for completeness;
  $B$ and $u$ are also generally not small, but added for completeness;
  $\app,s$ and $g$ can be set by the designer, and they are usually rather small, e.g.~$5$ to $10$.

  Since the problem can be solved in polynomial-time on hierarchical instances, we also consider two distance parameters to a hierarchical instance, $D_p$ and $D_g$, the minimum number of projects (respectively, groups) whose deletion leads to a hierarchical instance.
  Finding efficient algorithms for such distance parameters implies that not only hierarchical instances can be solved efficiently, but also instances that are close to being such (Theorems~\ref{thm:partition-pb-fpt-dg+s} and~\ref{thm:partition-pb-fpt-d}).
  This is particularly useful in the presence of a few outliers (due to this, distance parameters are studied frequently in parameterized complexity, including in voting theory~\cite{bredereck2014parameterized,GuptaJRSZ20,wellstructured}).

  In particular, the main focus of our paper is on adding a group structure on top of a standard PB instance.
  From this point of view, PB election designers can choose how complex they want the group structure to be. Thus, studying the complexity of \gpb wrt.\ our parameters---in particular, the parameter $g$ and the distance parameters---sheds light on the effect of adding groups on the complexity of the problem (which is polynomial-time solvable when there are no groups as it can be solved using dynamic programming via equivalence to {\sc Unary Knapsack}~\cite{talmon2019framework}).
  Following our parameterized tractability results, a PB election designer can practically use our group structures, albeit perhaps not with an arbitrary number of groups of unlimited structural complexity.

  Table~\ref{tab:pc-results} lists most of our complexity results.
  Parameterized complexity wrt.\ $g$ is open (Open Question~\ref{thm:fpt-g}), however we have an approximation scheme that is $\FPT$ wrt.\ $g$ (Theorem~\ref{thm:fpt-as-simpler}) as well as $\W[1]$-hardness wrt.\ $g$ for a slightly more general problem (Theorem~\ref{thm:w-h-ugpb}).
  Tables~\ref{tab:apx-results} and~\ref{tab:inapx-results} summarize our (in)approx\-imability results.

\begin{remark}
While some city planners may not care for complexity results, we are personally aware of some that are hesitant to use algorithmic methods that may not be efficient and thus may require extensive computational resources.\footnote{In particular, one of the authors, while trying to convince a deputy mayor of a medium-sized city to implement a participatory budgeting process, faced significant criticism regarding the worry of the need of using extensive computational resources.}
Thus, in addition to being of theoretical interest, our complexity analysis results have practical implications regarding the feasibility of adding group-wise budget upper bounds to PB. (At least, as much as theoretical results imply practical feasibility.)
\end{remark}

\begin{table}[t]
    \centering
    \smallskip
    \begin{tabular}{lll}
        \toprule
        {\bf Upper-bound on $g$} & {\bf Approximation ratio} & {\bf Reference} \\
        \midrule
        $\Oh(1)$ & P & Corollary~\ref{cor:g-constant-poly}\\
        $\log_4\log(m^{\Oh(1)})$ & PTAS & Corollary~\ref{cor:ptas-for-g-loglogm}\\
        any $g$ & $g+2$ & Corollary~\ref{cor:g-plus-2-apx}\\
        \bottomrule
    \end{tabular}
    \caption{Achieved approximation ratios (in polynomial time) depending on the number $g$ of groups.
    The smaller $g$ compared to $m$, the better approximation for \mgpb we can achieve.} \label{tab:apx-results}
\end{table}
\begin{table}[t]
    \centering
    \smallskip
    \begin{tabular}{lll}
        \toprule
        {\bf Lower-bound on $g$} & {\bf Inapproximability} & {\bf Reference}  \\
        \midrule
        $\frac{3}{2}m$ & no PTAS & Theorem~\ref{thm:no-ptas-g-linear-on-m}\\
        $m^2$ & no $g^{\frac{1}{2}-\epsilon}$-approximation alg. & Theorem~\ref{thm:sqrt-g-apx}\\
        $m^2$ & no $m^{1-\epsilon}$-approximation alg. & Theorem~\ref{thm:sqrt-g-apx}\\
        \bottomrule
    \end{tabular}
    \caption{Achieved polynomial-time inapproximability results depending on the number $g$ of groups.
    The larger $g$ compared to $m$, the higher is the approximation ratio excluded.} \label{tab:inapx-results}
\end{table}

\paragraph{Initial Observations.}

For completeness, we mention that \gpb is trivially $\FPT$ wrt.\ $m$, by a brute-force algorithm in $\Ohstar(2^m)$ time\footnote{\Ohstar{} hides factors that are polynomial in the input size.} (and, as the Exponential Time Hypothesis implies a lower bound of $2^{o(|V|)}$ for \is, we conclude a lower bound of $2^{o(m)}$ following the reduction in the proof of Theorem~\ref{thm:paraNP-l+s+f}).
Furthermore, \gpb is $\FPT$ wrt.\ $\app+n$ as every project is approved by at least one voter, implying $m \leq \app\cdot n$.

\paragraph{Road Map.}
In Subsection~\ref{subsec:related-works} we discuss research the most related to ours.
In Section~\ref{sec:layerdecompositions} we consider a structural property of family of sets, useful for obtaining a polynomial-time algorithm for hierarchical families and may also be of independent interest.
Then, in Section~\ref{sec:intractability}, we present intractability results of \gpb. Sections~\ref{sec:tractability1} and~\ref{sec:tractability2} are devoted to parameterized analysis of \gpb.
The approximability results are in Section~\ref{sec:apx}.
We conclude in Section~\ref{sec:outlook}, in which we discuss future research directions.

\subsection{Related Work}\label{subsec:related-works}
The literature on PB is quite rich~\cite{aziz2018proportionally,rey2023computational}; formally, we generalize the framework of Talmon and Faliszewski~\cite{talmon2019framework} by adding group structures to approval-based PB.
Jain \textit{et al.}~\cite{pbsub} and Patel \textit{et al.}~\cite{PatelKL21} also consider---albeit significantly simpler---group structures (with {\it layerwidth} $1$; see Definition~\ref{def:layerwidth}).

Fairness constraints are studied, e.g., in the contexts of influence maximization~\cite{tsang2019group}, clustering~\cite{chierichetti2017fair}, and allocation problems~\cite{benabbou2018diversity}.
Our focus is on fairness in PB (e.g., not spending all funds on one district). Recently, Hershkowitz \textit{et al.}~\cite{HershkowitzKPP21} introduced a district-fairness notion by allowing projects have different utility for different districts. There are papers on fairness and group structures for the special model of multiwinner elections~\cite{izsak2018committee,CelisHV18,yang2018multiwinner,wellstructured,ianovski2022electing}.

Technically, \gpb is a special case of the \ddkfull problem (\ddk; also called  Multidimensional Knapsack)~\cite[Section 9]{knapsackbook}:
  given a set of items, each having a  $d$-dimensional size-vector and its utility, a $d$-dimensional knapsack capacity vector $\beta$ with an entry for each dimension, and required integer utility---with all input numbers being non-negative integers---the goal is to choose a subset of the items with at least the required total utility and such that the sum of the chosen items' sizes is bounded
by the knapsack capacity, in each dimension.
\ddk generalizes \gpb:
  items in \ddk correspond to projects; fix an order on $\calF$, i.e., $(F_1,F_2,\dots,F_g)$, resulting in $d = g+1$ many dimensions, a $(g+1)$-dimensional size vector $\gamma$ for an item $p \in P$, defined by
$\gamma_p(i) = c(p)$ if $p \in F_i$ and $\gamma_p(i) = 0$ otherwise, $\gamma_p(g+1) = c(p)$ for every $p\in P$, corresponding to a global budget, utility of an item $p \in P$ equals its approval score, required utility in \ddk equals $u$, and the $(g+1)$-dimensional bin $\beta$ is defined via $\beta(i) = b(F_i)$ for $i \in \{1,2,\dots,g\}$, with $\beta(g+1) = B$.
So, \gpb is an instance of \xdk{(g+1)} where each item $p \in P$ has only two possible sizes over dimensions, i.e., $0$ and $c(p)$.
Crucially, as our model is a special case we can use our special instance structure; hence, we treat results for \ddk as a good benchmark, in particular, the (in)approximability results for \ddk~\cite{knapsackbook}.

Recently, many papers introducing fairness components to well known computational problems have appeared.
One of them models fairness among voters by introducing a more general utility function for the voters.
In essence, instead of a linear function one may apply any function.
For example, a harmonic utility function $f(q) = \sum_{q=1}^i 1/q$ (or any concave) models the law of diminishing marginal utility known in economics.
Having unit-cost projects and harmonic utility function we get the {\sc Proportional Approval Voting} problem~\cite{Thiele95,Kilgour2010,ByrkaSS18,BarmanFF21,DudyczMMS20}.
When additionally allowing projects to have different costs, we fall into {\sc Fair Knapsack}~\cite{fluschnik2019fair}.
Note that these problems model fairness by specifying what is the total utility of an outcome.
In essence, a voting rule with properly chosen utility functions disallows
a large group of voters to decide a final outcome being their most preferred projects.
Hence preferences of a small group impact a solution as well.
A more general framework was proposed by Jain \textit{et al.}~\cite{pbsub} in which
projects are partitioned into groups and
a utility function can be a general non-decreasing function.
For each voter a utility function is applied for each group of projects separately.
This models interaction between projects.
The most interesting utility functions are concave (and convex) ones
that model substitution (and complementarity respectively) effects for groups of projects.
For such functions Jain \textit{et al.}~\cite{pbsub}
achieved more positive algorithmic results (polynomial-time or fixed-parameter tractable (FPT) algorithms)
in contrast to general utility functions
that are not possible to approximate up to a factor better than $n^{o(1)}$ (assuming Gap-ETH).
For concave utility functions also utility-lost minimization variants where considered~\cite{SwamyS08,ByrkaSS18}, motivated by some fairness notion in facility location context.

In contrast to the works described in the previous paragraph,
in this paper we focus on the most natural and the simplest linear utility function, which has practical advantages,
but to achieve fairness we group projects and add budget-limits for each group, i.e.,
every feasible outcome does not exceed a budget-limit among the given groups of projects.
A project may be contained in a few or none of the groups.
Therefore, a large group of voters cannot force choosing only one type of projects (educational, environmental etc.)
that is supported by the large group.

There is a recent paper~\cite{PatelKL21} which introduce categories for the projects, and consider additional fairness notions among the categories.
First of all, they allow to have both a lower-bound and an upper-bound on the total cost of the projects taken from each category.
As a separate model they propose to have a lower-bound and an upper-bound on the cardinality of the set of projects taken from each category.
The third model makes use of lower-bounds and upper-bounds on the total utility achieved by the projects taken from each category.
Further, instead of approval ballots they consider general values for utilities (they call it {\it values}) and general costs of projects, i.e., they can be any non-negative real numbers.
Therefore it may seem that Patel \textit{et al.}~\cite{PatelKL21} considered a much more general model than we do, but there is a very crucial difference:
their categories of projects make a partition of projects, i.e., each project belongs to exactly one category (similarly to a project interaction model introduced by Jain \textit{et al.}~\cite{pbsub}).
Our model allows to have a family of groups of projects $\calF$ that is any subset of $2^P$.
It means we could have even $2^{|P|}$ many groups in an instance
in contrast to the model of Patel \textit{et al.}~\cite{PatelKL21} that can have at most $|P|$ groups.
It shows that the model of Patel \textit{et al.}~\cite{PatelKL21} and our model have a non-empty intersection and they generalize the intersection in different directions.
In particular, the model proposed by Patel \textit{et al.}~\cite{PatelKL21} with only cost upper-bounds on categories is a special case of our model that has {\it layerwidth} (see Definition~\ref{def:layerwidth}) equal to $1$ and
which is a special case of {\it hierarchical} instances that we show how to solve in polynomial time.
In contrast to our results, the results presented by Patel \textit{et al.}~\cite{PatelKL21} for cost upper-bounds provide the required total utility (an exact solution) but violate group cost bounds and a global budget by a fixed factor $\epsilon$.
They rely on a standard bucketing technique extensively, which was used to construct a fully polynomial-time approximation scheme (FPTAS) for the classical knapsack problem~\cite{IbarraK75,Lawler79}.
This limits the number of different costs and also causes possible bound violations.
Our results do not violate any of the constraints.
This is possible because the utilities of the projects come from approval ballots hence they are integral and (we may think) they are decoded in unary.
In contrast to the work of Patel \textit{et al.}~\cite{PatelKL21}, which considers
approximation algorithms, we consider also parameterized complexity (hence exact feasible solutions).

\subsection{Preliminaries}

We use the notation $[n] = \{1,2,\dots,n\}$ for $n \in \mathbb{N}$.

In parameterized complexity, problem instances are associated with a {\em parameter}, so that a \emph{parameterized problem} $\rm \Pi$ is a subset of $\Sigma^{*} \times \mathbb{N}$, where $\Sigma$ is a finite alphabet. A corresponding instance is a tuple $(x,k)$, where $x$ is a classical problem instance and $k$ is the parameter. A central notion in parameterized complexity is \emph{fixed-parameter tractability} (\FPT, in short), which means, for a given instance $(x,k)$, decidability in $f(k) \cdot {\sf poly}(|x|)$ time, where $f(\cdot)$ is an arbitrary computable function and ${\sf poly}(\cdot)$ is a polynomial function. The Exponential Time Hypothesis (ETH) is a conjecture stating that $3$-{\sc SAT} cannot be solved in time that is subexponential in the number of variables. For more about parameterized complexity see, e.g., the book of Cygan \textit{et al.}~\cite{CyganFKLMPPS15}.

\section{Layer Decompositions}\label{sec:layerdecompositions}

The following is a useful structural property.
The results of this section would be useful throughout the paper.

\begin{definition}[Layer Decomposition]
  A {\em layer decomposition} of a family of sets $\calF$ is a partition of the sets in $\calF$ such that every two sets in a part are disjoint.
  Each part is a {\em layer}.
\end{definition}

\begin{definition}[Layerwidth $\ell$]\label{def:layerwidth}
The {\em width} of a layer decomposition is the number of layers in it.
The {\em layerwidth} of a family of sets $\calF$, denoted by $\ell(\calF)$ (or simply $\ell$ if $\calF$ is clear from the context), is the minimum width among all possible layer decompositions of $\calF$.
\end{definition}

We consider the problem of finding a layer decomposition, and refer to the problem of finding one of minimum width as {\sc Min Layer Decomposition}, with its decision version called {\sc Layer Decomposition}, which, given a family $\calF$ of sets and an integer $\ell$, asks for the existence of a layer decomposition with width~$\ell$.

Unfortunately, {\sc Layer Decomposition} is intractable via a reduction from {\sc Edge Coloring}.

\begin{theorem}\label{thm:LD-hard}
  {\sc Layer Decomposition} is $\NP$-hard even when $\ell=3$ and $s=2$, where $s$ is the maximum size of a set in the given family $\calF$.
\end{theorem}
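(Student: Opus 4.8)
The plan is to exploit the fact that the constraint $s=2$ forces a clean graph-theoretic reading of the problem. When every set in $\calF$ has size at most $2$, I would restrict attention (in the constructed instance) to sets of size exactly $2$, and view each such set $\{u,v\}$ as an edge of a graph $G$ with vertex set $P$ and edge set $E(G)=\calF$. Under this correspondence, two sets are \emph{disjoint} precisely when the two edges share no endpoint, so a single layer is exactly a \emph{matching} of $G$, and a layer decomposition of $\calF$ is exactly a partition of $E(G)$ into matchings, i.e.\ a proper \emph{edge coloring} of $G$. The width of the decomposition equals the number of color classes, and hence $\ell(\calF)$ equals the chromatic index $\chi'(G)$. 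This dictionary is the heart of the reduction, and it is what makes {\sc Edge Coloring} the natural source problem.

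Next I would reduce from the problem of deciding whether a cubic (3-regular) graph is $3$-edge-colorable, which is $\NP$-complete by Holyer's theorem. Given a cubic graph $G$, I construct the instance of {\sc Layer Decomposition} with ground set $P=V(G)$, family $\calF=\{\{u,v\}:uv\in E(G)\}$, and target width $\ell=3$. This construction runs in polynomial time, every set has size exactly $2$ (so $s=2$), and the target width is $3$, matching the claimed restriction. By the dictionary above, a layer decomposition of width $3$ corresponds to a proper edge coloring of $G$ with $3$ colors, so the constructed instance is a \textbf{yes}-instance if and only if $G$ is $3$-edge-colorable. Since $G$ is cubic we have $\chi'(G)\in\{3,4\}$ by Vizing's theorem, so distinguishing these two cases is exactly the $\NP$-hard question, completing the reduction.

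The main point requiring care is the precise reading of ``width $\ell$'' in the decision version, namely whether it asks for a decomposition into exactly $\ell$ layers or at most $\ell$ layers, together with whether layers are permitted to be empty. For the target value $\ell=3$ this is harmless: in a cubic graph every vertex is incident to three edges that must receive distinct colors, so any $3$-edge-coloring uses all three color classes nonemptily, and therefore ``exactly $3$ nonempty matchings'' and ``at most $3$ matchings'' coincide on these instances. Hence existence of a width-$3$ layer decomposition is equivalent to $3$-edge-colorability of $G$ regardless of the convention. The only genuine external ingredient is the strong $\NP$-completeness of $3$-edge-coloring on cubic graphs (Holyer); once that is invoked, the remainder is the verification of the matching/coloring correspondence, which is routine.
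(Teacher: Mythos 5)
Your proposal is correct and follows essentially the same route as the paper: both reduce from Holyer's $\NP$-hard $3$-edge-coloring problem by identifying each edge $uv$ with the set $\{u,v\}$, so that layers correspond to matchings and a width-$3$ layer decomposition corresponds to a proper $3$-edge-coloring. Your extra remarks on the ``exactly versus at most $\ell$ layers'' convention and on Vizing's theorem are careful additions but do not change the substance of the argument.
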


\begin{proof}
We provide a polynomial time reduction from the {\sc Edge Coloring} problem, in which we are given a graph $G$ and an integer $k$;
we shall decide the existence of a mapping $\Psi\colon E(G)\rightarrow [k]$ such that if two edges $e,e' \in E(G)$ share an endpoint, then $\Psi(e)\neq \Psi(e')$.
This problem, even when $k=3$, is known to be $\NP$-hard~\cite{holyer1981np}. We call such a function a {\em proper coloring function}.
We create a family of sets, $\calF$, as follows.
For every edge $uv \in E(G)$, we have a set $\{u,v\}$ in the family $\calF$.
We set $\ell =k$.
Next, we prove the equivalence between the instances $(G,k)$ of the {\sc Edge Coloring} problem and $(\calF,\ell)$ of the {\sc Layer Decomposition} problem.

In the forward direction, let $\Psi\colon E(G)\rightarrow [k]$ be a proper coloring function.
We create a partition, $L_1,\ldots,L_\ell$, of the sets in $\cal F$ as follows:
$L_i = \{\{u,v\}\colon \Psi(uv)\in i\}$.
Due to the definition of a proper coloring function, every two sets in a part $L_i$, $i\in [\ell]$, are disjoint.

In the backward direction, let $L_1,\ldots,L_\ell$ be a layer decomposition of $\calF$.
We define proper coloring function $\Psi$ as follows: if the set $\{u,v\}\in L_i$, $i\in[k]$, then $\Psi(uv) = i$.
We claim that $\Psi$ is a proper coloring function.
Suppose not, then there are two edges $e,e'$ that share endpoints and $\Psi(e)=\Psi(e')=i$.
Without loss of generality, let $e=uv$ and $e'=uw$.
Since $\Psi(e)=\Psi(e')=i$, $\{u,v\},\{u,w\} \in L_i$, contradicting that $L_i$ is a part in a layer decomposition of~$\calF$.
\end{proof}

A reduction to {\sc $2$-Graph Coloring} gives a polynomial-time algorithm for layerwidth $2$.

\begin{theorem}
There exists a polynomial-time algorithm that finds a layer decomposition of layerwidth two, if it exists.
\end{theorem}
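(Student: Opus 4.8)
The plan is to reduce \textsc{Layer Decomposition} with parameter $\ell = 2$ to the problem of $2$-coloring a graph, which is equivalent to testing bipartiteness and hence solvable in polynomial time. The key observation is that a layer decomposition of width $2$ asks us to partition $\calF$ into two layers so that any two sets placed in the same layer are disjoint. This is precisely a proper $2$-coloring of an auxiliary \emph{conflict graph} $H$ whose vertices are the sets in $\calF$, with an edge between two vertices whenever the corresponding sets intersect.

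\textbf{Construction.} First I would build $H = (V(H), E(H))$ where $V(H) = \calF$ and $\{S_1, S_2\} \in E(H)$ if and only if $S_1 \cap S_2 \neq \emptyset$. This takes polynomial time: there are $|\calF|$ vertices and at most $\binom{|\calF|}{2}$ pairs to check, each checkable in time polynomial in the total input size. Then I would run any standard linear-time bipartiteness test (e.g., a BFS/DFS two-coloring) on $H$.

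\textbf{Correctness.} The equivalence is the heart of the argument, and it is a clean two-way implication. In the forward direction, suppose $\Psi \colon V(H) \rightarrow \{1,2\}$ is a proper $2$-coloring of $H$. Define $L_i = \{S \in \calF : \Psi(S) = i\}$ for $i \in \{1,2\}$. If two sets $S_1, S_2$ lie in the same layer $L_i$, then $\Psi(S_1) = \Psi(S_2)$, so by properness $\{S_1,S_2\} \notin E(H)$, meaning $S_1 \cap S_2 = \emptyset$; thus $(L_1, L_2)$ is a valid width-$2$ layer decomposition. Conversely, given a layer decomposition $(L_1, L_2)$, color each set by the index of the layer containing it; any two sets sharing a color lie in the same layer, hence are disjoint, hence are non-adjacent in $H$, so the coloring is proper. (If the decomposition uses only one layer, we may leave the second layer empty, which corresponds to using a single color class in $H$.)

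\textbf{Main obstacle.} There is no serious obstacle here: the only point requiring a little care is the degenerate boundary between width-$1$ and width-$2$ decompositions and the possibility that $H$ is disconnected or has isolated vertices, but these are handled uniformly by the bipartiteness test, since a disconnected bipartite graph is still $2$-colorable and isolated vertices can receive either color. I would simply note that $H$ admits a proper $2$-coloring if and only if it is bipartite, and bipartiteness is decidable in polynomial (indeed linear) time, which completes the proof.
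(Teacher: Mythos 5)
Your proof is correct and follows essentially the same route as the paper: both construct the conflict (intersection) graph on $\calF$ and reduce width-$2$ layer decomposition to $2$-colorability, decided via a bipartiteness test. The two-way equivalence you give matches the paper's argument, so there is nothing to add.
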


\begin{proof}
 We begin with giving a polynomial time reduction from {\sc Layer Decomposition} to the {\sc Vertex Coloring} problem, in which given a graph $G$ and an integer $k$, we need to  decide the existence of a mapping $\Psi\colon V(G)\rightarrow [k]$ such that if $uv\in E(G)$, then $\Psi(u)\neq \Psi(v)$.
 We call this mapping a {\em proper vertex coloring function}.
 Given an instance $({\cal F},\ell)$ of {\sc Layer Decomposition}, we create an instance $(G,k)$ of {\sc Vertex Coloring} as follows.
 For every set $F \in {\cal F}$, we add a vertex $u_F$ in the graph $G$.
 Next, we define the edge set of $G$. Let $F$ and $F'$ be two sets in ${\cal F}$ such that $F \cap F' \neq \emptyset$, then we add an edge $u_Fu_{F'}$ to $G$.
 We set $k=\ell$.
 Next, we prove the equivalence between the instance $({\cal F},\ell)$ of {\sc Layer Decomposition} and $(G,k)$ of {\sc Vertex Coloring}.
 
 In the forward direction, let $L_1,\ldots,L_\ell$ be a layer decomposition of $\calF$.
 We construct a proper vertex coloring function $\Psi$ as follows.
 If a set $F$ belongs to the set $L_i$, where $i\in [\ell]$, then $\Psi(u_F)=i$.
 We claim that if $u_Fu_{F'} \in E(G)$, then $\Psi(u_F)\neq \Psi(u_{F'})$.
 Consider an edge $u_Fu_{F'} \in E(G)$.
 Note that by the construction of the graph $G$, $F\cap F'\neq \emptyset$.
 Therefore, sets $F$ and $F'$ belong to different sets in $\{L_1,\ldots,L_\ell\}$. Hence, $\Psi(u_F)\neq \Psi(u_{F'})$.
 
 In the backward direction, let $\Psi$ be a proper vertex coloring function.
 We create a layer decomposition $L_1,\ldots,L_\ell$ as follows.
 If for a vertex $u_F$, $\Psi(u_F)=i$, then we add $F$ to layer $L_i$.
 Next, we prove that every two sets in every layer $L_i$, $i\in [\ell]$, are disjoint.
 Towards the contradiction, suppose that there exist two sets $F,F'$ in a layer $L_i$, $i\in [\ell]$ such that $F\cap F' \neq \emptyset$.
 Since $F,F'$ are in layer $L_i$, by the construction of the layer decomposition, we have that $\Psi(u_F)=\Psi(u_{F'})=i$.
 Since $F\cap F' \neq \emptyset$, by the construction of $G$, $u_Fu_{F'}\in E(G)$.
 This contradicts the fact that $\Psi$ is a proper vertex coloring function.
 
 Since we can find a proper coloring function $\Psi\colon V(G)\rightarrow [2]$ in polynomial time, if it exists, as it is equivalent to checking if the given graph is a bipartite graph (see, e.g.,~\cite{cormen}), the proof is complete.
\end{proof}

We discuss hierarchical families of sets (also known as {\it laminar families}).

\begin{definition}[Hierarchical Family]
A family of sets $\cal{F}$ is called \emph{hierarchical}, if every two sets $F_1$ and $F_2$ in the family~$\cal{F}$ are either disjoint or $F_1\subset F_2$ or $F_2\subset F_1$.
\end{definition}

\begin{theorem}\label{thm:poly-ld}
  There exists a polynomial-time algorithm that solves a given instance $({\calF},\ell)$ of {\sc Layer Decomposition} when $\calF$ is a hierarchical family.
\end{theorem}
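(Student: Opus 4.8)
The plan is to exploit the fact that a hierarchical (i.e., laminar) family is naturally organized as a forest under containment, and to reduce {\sc Min Layer Decomposition} to computing a longest chain in this forest.

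First I would record the key structural observation: in a hierarchical family $\calF$ of distinct nonempty sets, two sets are disjoint \emph{if and only if} they are incomparable with respect to containment. Indeed, by the definition of a hierarchical family any two sets are either disjoint or nested; nested sets intersect, while disjoint sets are incomparable, so the two dichotomies coincide. Consequently, a subfamily $L \subseteq \calF$ consists of pairwise disjoint sets precisely when $L$ is an antichain of the containment partial order $(\calF, \subseteq)$. Thus a layer decomposition is exactly a partition of $(\calF,\subseteq)$ into antichains, its width is the number of antichains used, and the layerwidth $\ell(\calF)$ equals the minimum number of antichains needed to partition $(\calF,\subseteq)$.

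Second I would invoke Mirsky's theorem, which states that the minimum number of antichains needed to partition a finite poset equals the maximum number of elements in a chain. Here a chain is a sequence of strict containments $F_1 \subsetneq F_2 \subsetneq \cdots \subsetneq F_t$ in $\calF$, so $\ell(\calF)$ equals the maximum such $t$. Mirsky's theorem is moreover constructive: assigning each set $F$ to the layer indexed by $r(F)$, the maximum number of sets in a containment chain whose top element is $F$, yields a minimum-size partition into antichains. Both the value and the assignment are computable in polynomial time. I would first determine the containment relation by testing $F \subseteq F'$ over all $O(|\calF|^2)$ ordered pairs, which arranges $\calF$ into a forest (the parent of $F$ being the smallest set properly containing it). A single bottom-up pass over this forest computes every $r(F)$, and the overall maximum of $r(\cdot)$ equals $\ell(\calF)$; placing $F$ into layer $r(F)$ produces an explicit minimum-width layer decomposition.

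Finally, to answer the decision problem for the given target $\ell$, I would simply compare the computed minimum width with $\ell$: reading ``width $\ell$'' as ``at most $\ell$'' gives a yes-instance iff $\ell(\calF) \le \ell$, while reading it as ``exactly $\ell$'' gives a yes-instance iff $\ell(\calF) \le \ell \le |\calF|$, since any optimal decomposition can be refined by splitting antichains (down to singletons, which are trivially pairwise disjoint) to attain any number of layers between $\ell(\calF)$ and $|\calF|$. I expect the only genuinely delicate point to be the structural equivalence of the first step---verifying that in a laminar family incomparability forces disjointness, so that ``pairwise disjoint'' and ``antichain'' coincide. Once that is established, the remainder is a direct application of Mirsky's theorem together with a standard forest computation, and every step is clearly polynomial.
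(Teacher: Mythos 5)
Your proposal is correct and is, at its core, the same algorithm as the paper's: both assign each set to the layer given by its depth in the containment forest and certify optimality by exhibiting a longest containment chain, which must occupy distinct layers. The only real difference is packaging---you observe that in a laminar family of nonempty sets ``pairwise disjoint'' coincides with ``antichain of $(\calF,\subseteq)$'' and then cite Mirsky's theorem, whereas the paper re-derives that duality by hand via a topological ordering, a DFS out-tree, and a tournament argument on root-to-leaf paths; your framing is cleaner and also treats the ``width exactly $\ell$ versus at most $\ell$'' reading of the decision problem more carefully than the paper does.
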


\begin{proof}
Let $\calF'=\calF\cup \{S\}$, where $S$ is the set of all the elements which are in the sets in the family $\cal{F}$, and $\ell'=\ell+1$.
Clearly, $(\cal{F},\ell)$ is a yes-instance of {\sc Layer Decomposition} if and only if $(\cal{F}',\ell')$ is a yes-instance of {\sc Layer Decomposition}.

We first create a directed graph, $G$, called a {\em subset relationship graph}, as follows.
For every set $F \in {\calF}'$, we add a vertex $v_F$ in $G$.
Now, if $F\subset F'$, where $F,F' \in {\calF}$, then we add an arc $(v_{F'},v_{F})$ to the graph $G$.
Note that $G$ is a transitive acyclic graph.
Let $\sigma$ be a topological ordering (a topological ordering is an ordering, $\sigma$, of the vertices of a digraph such that if $(u,v)$ is an arc in the digraph, then $\sigma(u)<\sigma(v)$;
and it can be found for an acyclic graph in polynomial time~\cite{kleinberg2006algorithm}) of $G$.
Next, we create an out-tree rooted at $\sigma(1)$, say $T$, using DFS in which we first traverse the neighbor of a vertex which appears first in the ordering $\sigma$.
That is, if the currently visited vertex is $v$, and $z_1,z_2$ are out-neighbors of $v$, then we first visit $z_1$ if $\sigma(z_1)<\sigma(z_2)$, and we first visit $z_2$ otherwise. We create a partition, say $\cal{L}$, of $\calF$ as follows:
Sets corresponding to the vertices at a level of out-tree $T$ forms a layer in $\cal{L}$.
We first prove that $\cal{L}$ is a layer decomposition.
Suppose not, then there exist two sets $F_1,F_2$ in a layer in $\cal{L}$ which are not disjoint.
Without loss of generality, let $F_1 \subset F_2$.
Then, due the construction of $G$, $(v_{F_2},v_{F_1})$ is an arc in $G$.
This implies that $\sigma(F_2)<\sigma(F_1)$.
Since $\calF$ is a hierarchical family, if $F_2 \subset F$, then $F_1\subset F$.
Thus, an in-neighbor of $v_{F_1}$ is also an in-neighbor of $v_{F_2}$.
Therefore, by our construction of out-tree, we first visit $v_{F_2}$ than $v_{F_1}$.
Since $(v_{F_2},v_{F_1})$ is an arc in $G$, it contradicts that $v_{F_2}$ and $v_{F_1}$ are at the same level in the out-tree $T$ which is constructed using DFS.

Next, we argue that the width of $\cal{L}$ is $\ell'$ if and only if $(\calF', \ell')$ is a yes-instance of {\sc Layer Decomposition}.
The forward direction follows trivially. Next, we show that if the width of $\cal{L}$ is more than $\ell'$, then $(\calF', \ell')$ is a no-instance of {\sc Layer Decomposition}.
Suppose that the width of $\cal{L}$ is more than $\ell'$, then due to the construction of the layer decomposition, the number of levels in the out-tree $T$ is more than $\ell'$.
This implies that there exists a directed path from root whose length is at least $\ell'+1$.
Due to the transitivity of graph $G$, directed graph induced on the vertices in this path forms a tournament.
Thus, all the sets corresponding to vertices in this path belong to distinct layers. Thus, the layerwidth of $\calF'$ is at least $\ell'+1$.
This completes the proof.
\end{proof}

\begin{remark}\label{rem:hierarchical-ld}
 The general idea of the algorithm described in the proof of Theorem~\ref{thm:poly-ld} is to build a graph with one vertex for each group and edges corresponding to group intersections, followed by traversing the graph in topological order and constructing the corresponding hierarchical tree.
 Note that, conveniently, the algorithm can be modified to construct an ordered layer decomposition such that every set in the $i$-th layer is a subset of a set in the $(i-1)$-th layer.
\end{remark}

\section{Intractability of General Instances}\label{sec:intractability}

Next we prove intractability, showing that \gpb is $\NP$-hard even when some of the input parameters are constant.
Note, importantly, that we can solve the standard PB problem---without project groups---in polynomial time (as it can be solved using dynamic programming via equivalence to {\sc Unary Knapsack}~\cite{talmon2019framework}).

The following result is obtained via reductions from the \is (\issmall) problem on $3$-regular $3$-edge colorable graphs.

\begin{theorem}\label{thm:paraNP-l+s+f}
\gpb\ is $\NP$-complete even when $b_{\max}=1$, $s=2$, $\app=1$, and $\ell=3$; and even when $b_{\max}=1$, $s=2$, $n=1$, and $\ell=3$.
\end{theorem}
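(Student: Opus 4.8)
The plan is to reduce from \is on $3$-regular, $3$-edge-colorable graphs, which is \NP-hard, so that the intersection pattern of the groups is dictated by the edge set of the input graph. Membership in \NP\ is immediate: given a candidate bundle $X$, one checks the utility, the global budget, and every group budget in polynomial time; all the work is in the hardness direction, and a single construction will serve both parameter regimes.

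Given such a graph $G$, I would introduce one project $p_v$ per vertex $v \in V(G)$ and set every cost to $c(p_v)=1$. For each edge $uv \in E(G)$ I create the group $F_{uv}=\{p_u,p_v\}$ with budget $b(F_{uv})=1$. Because all costs are $1$, the group constraint $\sum_{p \in F_{uv}\cap X} c(p) \le 1$ says exactly that $X$ contains at most one endpoint of each edge, i.e.\ the chosen projects form an independent set of $G$. By construction each group has size $s=2$ and budget $b_{\max}=1$, and since $G$ is simple no two groups coincide. Setting the global budget and the target utility to $B=u=k$ makes the reduction tight: any feasible $X$ satisfies $|X|=\sum_{p\in X}c(p)\le B=k$, while its utility (equal to $|X|$, as arranged below) is at least $u=k$, forcing $|X|=k$; together with the group constraints this is precisely an independent set of size $k$, and the converse is clear.

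The two parameter regimes differ only in how the utility is realized as $|X|$. For the first regime ($\app=1$) I create one voter per vertex, the voter for $v$ having approval ballot $\{p_v\}$; then $\sum_{v\in\calV}|P_v \cap X| = |X|$ and every ballot has size one. For the second regime ($n=1$) I use a single voter approving all of $P$, again giving utility $|X|$; here $\app$ is unconstrained but $n=1$. In both cases every project is approved by some voter, respecting the standing assumption, and all other parameters ($b_{\max},s,\ell$) are as above.

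The step I expect to require the most care is certifying that the layerwidth is exactly $3$, and this is where both hypotheses on $G$ are used. A proper $3$-edge-coloring of $G$ partitions $E(G)$ into three matchings; the groups arising from one matching are pairwise disjoint (matched edges share no endpoint), so each matching is a valid layer and $\ell \le 3$. For the lower bound, $3$-regularity forces the three edges incident to any fixed vertex into three distinct layers, so $\ell \ge 3$ and hence $\ell = 3$. This is exactly why the reduction must start from the $3$-regular, $3$-edge-colorable class rather than from arbitrary \is instances: $3$-regularity secures $\ell \ge 3$ while $3$-edge-colorability secures $\ell \le 3$, pinning the layerwidth to the value required by the statement.
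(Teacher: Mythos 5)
Your proposal is correct and follows essentially the same reduction as the paper (from \is on $3$-regular $3$-edge-colorable graphs with unit-cost vertex projects, one group of budget $1$ per edge, and $B=k$); the only difference is that for the $\app=1$ regime you use one voter per vertex with $u=k$, where the paper uses two voters per edge with $u=3k$, which is an equally valid and slightly simpler realization. Your explicit lower-bound argument that $3$-regularity forces $\ell\geq 3$ is a small addition that the paper leaves implicit.
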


\begin{proof}

We first show the proof for $b_{\max}=1$, $s=2$, $\app=1$, and $\ell=3$; afterwards, we show how to modify it to get the other claim.

We describe a polynomial-time reduction from the \is (\issmall) problem on $3$-regular $3$-edge colorable graphs,
in which given a graph $G=(V,E)$,
where each vertex has degree $3$ and the edges of graph can be properly colored using $3$ colors (no two edges that share an end-point are colored using the same color),
and an integer $k$, we need to decide the existence of a $k$-sized independent set (there is no edge between any pair of vertices in the set).
\issmall is known to be $\NP$-complete on $3$-regular $3$-edge colorable graphs~\cite{chlebik2003approximation}.
For every vertex $x\in V(G)$, we have a project $p_x$ in $P$;
and for every edge $e \in E(G)$, we have two voters $w_e$ and $w_{\hat{e}}$ in $\calV$.
If $e=\{x,y\}$, then the voter $w_e$ approves projects $p_x$ and $w_{\hat{e}}$ approves  $p_y$, that is, $P_{w_e}=\{p_x\}$ and $P_{w_{\hat{e}}}=\{p_y\}$.
For every edge $e=\{x,y\}$, we have a set $\{p_x,p_y\}$ in the family $\calF$.
For every project $p\in P$, $c(p)=1$.
For every set $F\in \calF$, $b(F)=1$. We set $B=k$ and $u=3k$.
Since $G$ is $3$-edge colorable, we can partition the sets in $\calF$ in $3$ groups such that every two sets in a group are disjoint, and hence $\ell=3$.

Next, we prove the equivalence between the instance $(G,k)$ of \issmall\ and $(\calV,P,\calE,\calF,c,B,b,u)$ of \gpb.
In the forward direction, let $S$ be a solution to $(G,k)$.
We claim that $P_S=\{p_x\in P \colon x\in S\}$ is a solution to $(\calV,P,\calE,\calF,c,B,b,u)$.
Since $c(p)=1$, for every project $p$ in $P$, we have that $\sum_{p\in P_S}c(p)=k$.
Since $S$ is an independent set in $G$, for every $F \in \calF$, $\sum_{p\in P_S\cap F}c(p)=1$.
Since $G$ is a $3$-regular graph, every project in $P_S$ is approved by three voters.
Moreover, since every voter approves only one project,
$\sum_{v\in V}|P_v\cap P_S|=3k$.
In the backward direction, let $X$ be a solution to $(\calV,P,\calE,\calF,c,B,b,u)$.
We claim that $S=\{x\in V(G) \colon p_x\in X\}$ is a solution to $(G,k)$.
Since $b(F)=1$, for every $F\in \calF$, clearly, $S$ is an independent set in $G$.
We claim that $|S|=k$.
Clearly, $|S|\leq k$, otherwise $\sum_{p\in X}c(p) > k$, a contradiction.
If $|S|<k$, then as argued above, $\sum_{v\in V}|P_v\cap X|<3k$, a contradiction.

To get the other claim, in the above reduction, instead of adding two voters for every edge in the graph, we add only one voter who approves all the projects and set $u=k$.
The rest of construction remains the same.
Then, using the same arguments as above, we have the second claim.
\end{proof}

As \is\ on general graphs is $\W[1]$-hard wrt.\ the  solution size, the next result follows using similar arguments as used in the proof of Theorem~\ref{thm:paraNP-l+s+f}.

\begin{theorem}\label{thm:w-hard-app2}
\gpb is \W$[1]$-hard wrt.\ $u$ even when $s=2$, $\app=1$, and $b_{\max}=1$; and even when  $s=2$, $n=1$, and $b_{\max}=1$.
\end{theorem}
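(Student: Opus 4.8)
The plan is to mirror the reduction of Theorem~\ref{thm:paraNP-l+s+f}, but to start from \is on \emph{general} graphs, which is $\W[1]$-hard with respect to the solution size $k$. Given an instance $(G,k)$ of \is with $G=(V,E)$, I would create one project $p_x$ with unit cost $c(p_x)=1$ for every vertex $x \in V(G)$, and one group $F_e = \{p_x,p_y\}$ with budget $b(F_e)=1$ for every edge $e=xy \in E(G)$. This immediately gives $s=2$ and $b_{\max}=1$, and the budget constraint $b(F_e)=1$ together with unit costs forces any feasible bundle to pick at most one endpoint of each edge, i.e., to correspond to an independent set. I set $B=k$.

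The one genuinely new point is the utility gadget. In Theorem~\ref{thm:paraNP-l+s+f} the target utility was set to $3k$ precisely because the graph was $3$-regular, so each selected vertex contributed exactly three approvals; on general graphs degrees vary, so I must decouple the utility from the degree. For the first claim ($\app=1$) I would add, for each vertex $x$, a single voter $w_x$ approving only $p_x$; then $\app=1$ and the total utility of a bundle $X$ equals $|X|$, so setting $u=k$ makes $u$ a function of the parameter alone. For the second claim ($n=1$) I would instead use a single voter approving \emph{all} projects, again giving utility $|X|$ and $u=k$; this exactly parallels the $n=1$ modification at the end of the proof of Theorem~\ref{thm:paraNP-l+s+f}. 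Note that, crucially, layerwidth is no longer among the bounded parameters, so I no longer need the source graph to be $3$-edge-colorable and may take it arbitrary.

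For correctness I would argue both directions as before. In the forward direction, an independent set $S$ of size $k$ yields $X=\{p_x : x\in S\}$: its cost is $k=B$, every group $F_e$ contains at most one project of $X$ by independence so its budget is respected, and the utility is exactly $|S|=k=u$. In the backward direction, a feasible bundle $X$ induces $S=\{x : p_x \in X\}$, which is independent because $b(F_e)=1$ rules out taking both endpoints of any edge; since the utility equals $|X|$ we get $|X|\ge u=k$, while $c(X)\le B=k$ forces $|X|\le k$, hence $|X|=k$ and $S$ is an independent set of size $k$. Because the parameter $u=k$ equals the \is solution-size parameter and the reduction is polynomial, $\W[1]$-hardness with respect to $u$ transfers.

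The main thing to get right is the utility gadget described above; everything else is a direct specialization of the earlier reduction. I expect no real obstacle beyond confirming that replacing the degree-dependent utility ($3k$ on $3$-regular graphs) by a single approval per selected project ($k$ on arbitrary graphs) keeps $u$ bounded by a function of $k$ alone, which is exactly what $\W[1]$-hardness with respect to $u$ requires.
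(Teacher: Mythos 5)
Your reduction is exactly the paper's: reuse the projects, costs, groups, budgets and $B$ from Theorem~\ref{thm:paraNP-l+s+f} but start from \is on general graphs, replace the degree-dependent voters with one voter per vertex approving only $p_x$ (resp.\ a single voter approving everything) and set $u=k$. The correctness argument and the observation that $\ell$ is no longer controlled also match, so this is a faithful match to the paper's proof.
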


\begin{proof}
Here, we give a polynomial-time reduction from the \is\ problem, which is known to be $\W[1]$-hard wrt.\ $k$, where $k$ is a solution size~\cite{DowneyF95}.

For the first claim, the set of projects $P$, the cost function $c$, the family $\calF$, the function $b$, and the total budget $B$ is the same as in Theorem~\ref{thm:paraNP-l+s+f}.
Next, for every vertex $x \in V(G)$, we have a voter $w_x$ which approves only the project $p_x$.
Thus, $\app=1$.
We set $u=k$.
The proof of correctness is similar to the proof of Theorem~\ref{thm:paraNP-l+s+f}.

To get the other claim, we use exactly the same reduction as in the second part of the proof of Theorem~\ref{thm:paraNP-l+s+f}.
Note that we have also $u=k$, hence $\W[1]$-hardness wrt.\ $u$ follows.
\end{proof}

\section{Tractability of Hierarchical Instances}\label{sec:tractability1}

We start our quest for tractability by considering \gpb instances whose group structure is hierarchical; when~$\calF$ is hierarchical, we refer to the \gpb problem as \hpb.
Fortunately, such instances can be solved in polynomial time.

\begin{lemma}\label{lemma:hier-pb-in-p}
 \hpb can be solved in polynomial time.
\end{lemma}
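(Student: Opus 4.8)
The plan is to exploit the hierarchical structure via the tree representation guaranteed by Remark~\ref{rem:hierarchical-ld}, and then run a bottom-up dynamic program over this tree. First I would augment $\calF$ with a root set $S$ containing all projects that appear in some group, with budget $b(S) = B$; this makes the global budget constraint just another group constraint and lets us treat all constraints uniformly. By hierarchicality, the containment relation on $\calF \cup \{S\}$ forms a laminar family, so I can build a rooted tree $T$ whose nodes are the sets of $\calF \cup \{S\}$, where the parent of a set is the smallest set strictly containing it. Projects that appear in no group can be handled separately (a project in no group only contributes to the global budget, so attach it as a leaf child of the root); each project is assigned to the \emph{smallest} group containing it, so it sits at a unique node of $T$.

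The core is a dynamic program computing, for each node $F$ of $T$ and each integer cost value $j \in \{0,1,\dots,b(F)\}$, the maximum total utility obtainable by a feasible selection of projects drawn from the subtree rooted at $F$ that spends \emph{exactly} (or at most) $j$ within $F$, while respecting the budget of every descendant group. I would process $T$ in post-order. At a leaf $F$, the table is computed by a standard knapsack over the projects assigned directly to $F$, subject to $c(\cdot) \le b(F)$. At an internal node, I first compute, for each child, the table over that child's subtree (whose cost contribution is capped by the child's own budget); I then combine the children's tables together with the projects assigned directly to $F$ by a sequence of knapsack-style convolutions, one child at a time, keeping the accumulated cost within $F$ bounded by $b(F)$. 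Crucially, since a child's budget is at most its parent's budget, the values $j$ indexed in the child's table already encode the child's internal feasibility, and the parent only needs to track the aggregate spend.

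The answer is read off from the root table as the maximum utility over cost values not exceeding $B$, and we accept iff this is at least $u$. For correctness I would argue, by induction on the tree from the leaves up, that the table at $F$ correctly records the best achievable utility for every feasible spend within $F$ respecting all descendant constraints; the laminar structure is exactly what guarantees that a selection feasible for each child subtree independently, with total cost within $b(F)$, is feasible for $F$ and all its descendants. For the running time, note all costs and budgets are integers bounded by $B$, and utilities (approval scores) are integers bounded by $n \cdot m$, so every table has at most $B+1$ entries and each convolution step is polynomial in $B$ and the number of projects; summing over the $O(g)$ nodes and $O(m)$ projects gives a polynomial bound. The main obstacle I anticipate is bookkeeping the combination step cleanly, namely ensuring that when merging a child's table into the parent we do not double-count a project's cost against both a group and its ancestor, and that the cost a child contributes to the parent is precisely the cost it spends (which is legitimate because the child's budget is no larger than the parent's). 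Getting this accounting right, together with handling projects outside all groups, is where the care lies; the dynamic program itself is routine once the laminar tree is in place.
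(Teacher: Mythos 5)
Your overall decomposition matches the paper's: root the laminar family at $P$ with budget $B$ so the global constraint becomes a group constraint, assign each project to the smallest group containing it (adding singleton or leaf nodes where necessary), and run a bottom-up knapsack-style convolution over the resulting tree, combining one child at a time. That structure is sound, and your correctness argument by induction over the tree is the right one. The genuine gap is in how you index the dynamic-programming table. You propose one entry per cost value $j \in \{0,1,\dots,b(F)\}$, storing the maximum utility achievable at spend $j$, and you bound the table size by $B+1$ and call each convolution ``polynomial in $B$.'' But the costs $c(\cdot)$ and the budgets $b(\cdot)$, $B$ are arbitrary natural numbers encoded in binary --- the paper's reduction from {\sc Partition} in Theorem~\ref{thm:partition-pb-xp-dg} relies on exactly this --- so $B$ can be exponential in the input size and your algorithm is only pseudo-polynomial, which does not establish the lemma.

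The paper avoids this by transposing the table: it indexes by utility $z \in \{0,\dots,u\}$ and stores the \emph{minimum cost} of a bundle drawn from the relevant parts that achieves utility $z$ (entries $T[S,j,z]$ over prefixes of the child partition $\mathtt{Part}_S$). Since utilities come from approval ballots, the total utility of any bundle is at most $\sum_{v}|P_v| \leq nm$, so the utility axis is polynomially bounded, while the possibly huge costs appear only as stored values compared against budgets, never as table indices. Your argument goes through essentially unchanged after this transposition: at each node take a min-cost convolution over children indexed by utility, discard entries whose stored cost exceeds the node's budget, and accept iff the root entry at utility $u$ is at most $B$. Without that change the claimed polynomial running time does not hold.
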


\begin{proof}
 Let $(\calV,P,\calE,\calF,c,B,b,u)$ be a given instance of \hpb.
 W.l.o.g., assume that $P$ is a set in the family ${\cal F}$ (otherwise, add it to $\calF$ and set $b(P)=B$).
 Using Remark~\ref{rem:hierarchical-ld}, let $\cal{L}$ be an ordered layer decomposition of ${\cal F}$ such that every set is a subset of some set in the preceding layer, and note that the first layer is $\{P\}$.
 Let $S$ be a set in some layer, say $L_i$, such that $|S|>1$.
 Suppose that there exists a project $p\in S$ such that $p$ is not in any set of $L_{i+1}$. We add the set $\{p\}$ to $\calF$ and $L_{i+1}$, and set $b(\{p\})=c(p)$ (if the layer $L_{i+1}$ does not exist, then we add this new layer). Note that we might increase the number of layers by $1$. Let $\ell$ be the number of layers.

 Now, we solve the problem using dynamic programming.
 For a set $S \in \calF$ in the $i$-th layer, where $i\in [\ell-1]$, such that $|S|>1$, let $\mathtt{Part}_S$ denote the partition of a set $S$ such that every part in $\mathtt{Part}_S$ is a set in the $(i+1)$-th layer.
 For a singleton set $S$, $\mathtt{Part}_S=\{S\}$.
 For every set $S \in {\cal F}$, we order the parts in $\mathtt{Part}_S$ arbitrarily.
 Let $|\mathtt{Part}_S|$ denote the number of parts in $\mathtt{Part}_S$ and let $\mathtt{Part}_S(i)$ denote the $i$-th part in $\mathtt{Part}_S$.
 Our table entries are defined as follows.
 For every set
 $S \in \calF$,
 $j\in \lvert \mathtt{Part}_S \rvert$, and utility $z\in [u]$ we define $T[S,j,z]$ as the minimum cost of a bundle that has utility $z$ and that is a subset of projects in first $j$ parts in $\mathtt{Part}_S$ and all budget limits for $S$ and all $S' \in \calF, S' \subseteq S$ are satisfied;
 if the budgets limits cannot be satisfied whenever achieving utility $z$ by a subset of projects in first $j$ parts in $\mathtt{Part}_S$ then $T[S,j,z] = \infty$.
 
 For a project $p$, let $a(p)$ denote the number of voters who approve the project $p$ (approval score).
 For a set $S$, let $a(S) = \sum_{p\in S} a(p)$.
 We compute the table entries level-wise in bottom-up order, that is, we first compute the value corresponding to sets at lower levels.

{\bf Base case:} For every set $S$, where $S=\emptyset$ or $S\in {\calF}$ and $0\leq z \leq u$ we set
\begin{equation}\label{eq1:hpb-poly}
\begin{split}
        T[S,0,z] = \begin{cases}
        0 & \text{if } z=0 \\
        \infty & \text{otherwise}
        \end{cases}
\end{split}
\end{equation}

{\bf Recursive Step:} For every set $S\in {\calF}$, $j \in [|\mathtt{Part}_S|]$, and $0\leq z \leq u$, we compute $T[S,j,z]$ in two steps as follows:
\begin{align}
        \text{first step:}&\quad T[S,j,z] :=
        \min\limits_{0\leq z' \leq z} \{T[S,j-1,z-z'] +
         T[\mathtt{Part}_S(j),\lvert \mathtt{Part}_S(j) \rvert,z']\}\\
        \text{second step:}&\quad \text{if}\quad T[S,j,z] > b(S) \quad\text{then set}\quad T[S,j,z] := \infty.\label{eq3:hpb-poly}
\end{align}
The second step informs that the minimum cost bundle which achieve utility $z$ already exceeds the budget limit for set $S$.
Notice that if the budget was exceeded in $\mathtt{Part}_S(j)$ for a considered $z'$ then $\infty$ is propagated correctly.

Next, we prove that \eqref{eq1:hpb-poly} and \eqref{eq3:hpb-poly} correctly compute $T[S,j,z]$ for all sets $S\in {\calF}$, $0\leq j \leq \lvert \mathtt{Part}_S \rvert$, and $0\leq z \leq u$.
Note first that, for $j=0$, the subset of $S$ that contains projects only from the first $j$ parts is $\emptyset$, with cost $0$ and $a(\emptyset) = 0$, implying the correctness of the base case.

For \eqref{eq3:hpb-poly}, we provide inequalities in both directions.
In one direction, let $\hat{S}\subseteq S$ be a bundle of minimum cost that is a subset of projects in first $j$ parts in $\mathtt{Part}_S$, has utility $z$ and satisfies all budget constraints for $S' \in \calF, S' \subseteq S$ (for now we assume that such a bundle exists).
Note that cost of bundle $\hat{S}\cap \mathtt{Part}_S(j)$ is a valid entry for
$T[\mathtt{Part}_S(j),\lvert \mathtt{Part}_S(j) \rvert, a(\hat{S}\cap \mathtt{Part}_S(j))]$
and cost of the bundle $\hat{S}\setminus \mathtt{Part}_S(j)$ is a valid entry for
$T[S,j-1,z-a(\hat{S}\cap \mathtt{Part}_S(j))]$
(and they are not equal to $\infty$). Thus,
$$T[S,j,z] \geq
 \min\limits_{0\leq z' \leq z} \{T[S,j-1,z-z']+ T[\mathtt{Part}_S(j),\lvert \mathtt{Part}_S(j) \rvert,z']\}.$$
In the case, that every $\hat{S} \subseteq S$ which achieves utility $z$ cannot satisfy all budget constraints for $S' \in \calF, S' \subseteq S$, the inequality holds trivially.

For the other direction, for any $0\leq z' \leq z$, let $\hat{S} \subseteq S$ be a bundle of minimum cost which is a subset of projects in first $j-1$ parts in $\mathtt{Part}_S$, has utility $z-z'$ and satisfies all budget constraints for $S' \in \calF, S' \subseteq \hat{S}$ (for now we assume that such a bundle exists).
Additionally, let $\tilde{S} \subseteq \mathtt{Part}_S(j)$ be a bundle of minimum cost with utility $z'$ which satisfies all budget constraints for $S' \in \calF, S' \subseteq \mathtt{Part}_S(j)$.
Note that $\hat{S}\cap \tilde{S}=\emptyset$.
Thus, $\hat{S}\cup \tilde{S}$ is a valid entry for $T[S,j,z]$; so
\begin{equation*}
\begin{split}
        T[S,j,z] \leq
        \min\limits_{0\leq z' \leq z} \{T[S,j-1,z-z']+
         T[\mathtt{Part}_S(j),\lvert \mathtt{Part}_S(j) \rvert,z']\}.
\end{split}
\end{equation*}
In the case that such $\hat{S}$ or $\tilde{S}$ do not exist for some $0\leq z' \leq z$, then the inequality still holds as some elements of a set over the minimum operator are equal to $\infty$.
This proves the correctness of (\ref{eq3:hpb-poly}).

Using (\ref{eq1:hpb-poly}) and (\ref{eq3:hpb-poly}), we compute $T[S,j,z]$, for all $S\in {\calF}$, $0\leq j \leq \lvert \mathtt{Part}_S \rvert$, and $0\leq z\leq u$ level-wise in bottom-up order.
Note that $T[P,|\mathtt{Part}_P|,u]$ implies the answer, i.e., the algorithm outputs a response YES if $T[P,|\mathtt{Part}_P|,u] \leq B$, and NO otherwise.
The running time of the algorithm is $\Oh(|\calF|su)$ as the size of any set in ${\calF}$ is at most~$s$.
This can be upperbounded by $\Oh(|\calF|nm^2)$ which is polynomial in the input size.
\end{proof}

Some instances might not be hierarchical but only close to being such, thus we study two distance parameters, namely, the minimum number $D_g$ of groups and the minimum number $D_p$ of projects, respectively, whose deletion results in a hierarchical instance.
Indeed, having efficient algorithms for such instances means that even more instances can be efficiently solved (e.g., instances with group structures corresponding to thematic districts in which some projects fit several groups).

We have the following lemmas---used later in the proofs for the parameterized complexity of \gpb wrt.\ $D_g+s$ and $D_p$---which are concerned with computing the set of groups/projects whose deletion leads to hierarchical instance.
Their proofs follow branching arguments, as, for $D_g$, at least one set from each pair of conflicting groups shall be deleted, and, for $D_p$, for a pair of conflicting groups $G_1, G_2$, either $G_1 \setminus G_2$ or $G_2 \setminus G_1$ or $G_1 \cap G_2$ shall be removed.

\begin{lemma}\label{lem:D_g deletion set}
 There exists an algorithm, running in $\Ohstar(2^{D_g})$ time, that finds a minimum-sized set of groups whose deletion results in a hierarchical instance.
\end{lemma}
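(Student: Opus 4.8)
The plan is to reduce the problem to computing a minimum vertex cover of an auxiliary \emph{conflict graph}, and then to run the textbook bounded-search-tree algorithm for Vertex Cover. First I would make the obstruction to being hierarchical precise. Call two groups $F_1, F_2 \in \calF$ \emph{conflicting} if $F_1 \cap F_2 \neq \emptyset$ while neither $F_1 \subseteq F_2$ nor $F_2 \subseteq F_1$; this is exactly the negation of the hierarchical condition applied to the pair $\{F_1,F_2\}$. Since being hierarchical is a property quantified only over pairs of sets, the subfamily $\calF \setminus D$ is hierarchical if and only if $D$ meets every conflicting pair, i.e.\ deletes at least one group from each such pair.

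Next I would construct the conflict graph $H$ whose vertex set is $\calF$ and which has an edge $\{F_1,F_2\}$ precisely when $F_1$ and $F_2$ conflict. Building $H$ takes polynomial time: there are $\binom{g}{2}$ pairs, and testing whether a given pair conflicts costs $O(s)$ (or $O(|P|)$). By the previous paragraph, a set $D \subseteq \calF$ is a feasible deletion set (one for which $\calF \setminus D$ is hierarchical) if and only if $D$ is a vertex cover of $H$. Hence a minimum-sized feasible deletion set is exactly a minimum vertex cover of $H$, and its cardinality is $D_g$.

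Finally I would apply the standard branching algorithm for Vertex Cover. For a guessed budget $k$: if $H$ has no edges, report the accumulated cover; otherwise pick any edge $\{u,v\}$ and branch into two subproblems, one placing $u$ and one placing $v$ into the cover, in each case deleting the chosen vertex from $H$ and decrementing $k$, pruning as soon as $k$ reaches $0$. Iterating this decision procedure for $k = 0, 1, 2, \dots$ and stopping at the first success returns a minimum vertex cover, hence a minimum deletion set. Each branch reduces $k$ by one, so for a fixed budget $k$ the search tree has depth at most $k$ and at most $2^{k}$ leaves, giving running time $\Ohstar(2^{k})$; summing over $k \leq D_g$ yields the claimed $\Ohstar(2^{D_g})$ bound.

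I do not anticipate a genuine obstacle here. The only point requiring care is the equivalence between feasible deletion sets and vertex covers of $H$, and this is immediate because the hierarchical property is purely pairwise: there is no transitivity condition to worry about, so ``no conflicting pair survives'' coincides with ``the surviving family is hierarchical.'' Everything else is the classical $\Ohstar(2^{k})$ Vertex Cover branching, which I would simply invoke.
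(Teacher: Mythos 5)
Your proposal is correct and is essentially the paper's own argument: the paper branches directly on a conflicting (non-disjoint, non-nested) pair of groups, deleting one of the two and decrementing $D_g$, which is exactly the Vertex Cover branching you describe run on your conflict graph. Your explicit observation that feasible deletion sets coincide with vertex covers of the conflict graph (valid because the hierarchical condition is purely pairwise and deleting groups never creates new conflicts) is a clean way to certify correctness, but it yields the same $\Ohstar(2^{D_g})$ algorithm.
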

\begin{proof}
 First, observe that, if there are two non-disjoint groups $G_1,G_2$ such that neither $G_1 \subseteq G_2$ nor $G_2 \subseteq G_1$, then we have to delete one of this group.
 So, we branch as follows:
 delete either $G_1$ or $G_2$, and decrease $D_g$ by $1$ in each branch.
 We only branch further if $D_g >0$.
 If there is a leaf in this branching tree for which the reduced family is hierarchical, then this leaf gives us required deletion set.
 The algorithm runs in $\Ohstar(2^{D_g})$-time.
\end{proof}

\begin{lemma}\label{lem:D_p deletion set}
 There exists an algorithm, running in $\Ohstar(3^{D_p})$ time, that finds a minimum-sized set of projects whose deletion results in a hierarchical instance.
\end{lemma}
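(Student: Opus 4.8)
The plan is to mirror the branching strategy already used for Lemma~\ref{lem:D_g deletion set}, but to account for the fact that deleting \emph{projects} (rather than whole groups) offers more ways to resolve a single conflict. The key structural observation, hinted at in the discussion preceding the lemma, is the following: if two groups $G_1, G_2 \in \calF$ are in \emph{conflict}---i.e., they are non-disjoint yet neither contains the other, so $G_1 \cap G_2 \neq \emptyset$, $G_1 \setminus G_2 \neq \emptyset$, and $G_2 \setminus G_1 \neq \emptyset$---then in any hierarchical sub-instance obtained by deleting a set of projects, at least one of the three regions $G_1 \setminus G_2$, $G_2 \setminus G_1$, $G_1 \cap G_2$ must be emptied of surviving projects. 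This is because hierarchy requires the surviving parts of $G_1$ and $G_2$ to be disjoint or nested; if all three regions retain a project, the two groups remain in conflict.

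First I would detect a conflicting pair $(G_1, G_2)$ by scanning over all pairs of groups in polynomial time. If no such pair exists, the instance is already hierarchical and we return the empty deletion set (or report success at a leaf). Otherwise, I would branch into three sub-cases according to the observation above: in one branch delete all projects in $G_1 \setminus G_2$, in the second all projects in $G_2 \setminus G_1$, and in the third all projects in $G_1 \cap G_2$. In each branch I would actually delete the corresponding projects from the instance (updating every group that contained them), accumulate the deleted projects into a running solution set, and recurse. To bound the depth, I decrease the budget parameter $D_p$ by the number of projects deleted in the chosen region; crucially, each region is nonempty by the definition of a conflict, so each branch strictly reduces $D_p$ by at least $1$. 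We only recurse while $D_p > 0$, so the branching tree has depth at most $D_p$ and branching factor $3$, yielding at most $3^{D_p}$ leaves and hence an $\Ohstar(3^{D_p})$ running time.

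For correctness I would argue both directions of a standard branching-correctness claim. Soundness: each leaf that reports a hierarchical instance corresponds to a genuine set of deleted projects whose removal yields a hierarchical family, since we only stop branching when no conflicting pair remains. Completeness: I would show that some root-to-leaf path discovers a minimum deletion set by an exchange argument---given any optimal deletion set $D^\star$, for the conflicting pair $(G_1,G_2)$ picked at the root, $D^\star$ must empty one of the three regions, so $D^\star$ restricted to that region is a superset of the projects we delete in the matching branch, and the remaining optimal deletions form a valid optimal solution for the reduced instance of size $|D^\star|$ minus what we deleted; induction on $D_p$ then gives that the minimum is found. I would be careful to note that deleting an entire region is safe because any optimal solution emptying that region deletes at least those projects anyway.

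The main obstacle I anticipate is the completeness subtlety: when we delete \emph{all} of, say, $G_1 \setminus G_2$, we might delete strictly more projects than some optimal solution does, if that optimal solution resolves the $(G_1,G_2)$ conflict by emptying a \emph{different} region. The three-way branch is precisely what handles this---we try all three regions, and at least one of them is a subset of the region that the optimal solution empties, so in that branch our deletions are contained in $D^\star$. A secondary technical point is ensuring the parameter drop is genuine: since each region of a conflicting pair is nonempty, $D_p$ strictly decreases in every branch, so the recursion terminates within depth $D_p$ and the claimed $\Ohstar(3^{D_p})$ bound holds, with only polynomial overhead per node for re-detecting conflicts and updating group memberships.
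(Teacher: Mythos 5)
Your proposal is correct and follows essentially the same approach as the paper: detect a conflicting pair $(G_1,G_2)$ and branch three ways on deleting all of $G_1\setminus G_2$, $G_2\setminus G_1$, or $G_1\cap G_2$, with each branch decreasing $D_p$ by at least one since each region is nonempty. The paper's proof is terser but relies on exactly the same observation and branching rule; your added exchange argument for completeness is a valid elaboration of what the paper leaves implicit.
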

\begin{proof}
 First, observe that, if there are two non-disjoint groups $G_1,G_2$ such that neither $G_1 \subseteq G_2$ nor $G_2 \subseteq G_1$, then we have to delete either intersecting part so that both the groups are disjoint or we delete $G_1\setminus G_2$ or $G_2 \setminus G_1$ so that one of them is subset of other.
 So, we branch as follows: Delete $G_1\cap G_2$ and decrease $D_p$ by $|G_1 \cap G_2|$ or delete $G_1\setminus G_2$ and decrease $D_p$ by $|G_1\setminus G_2|$ or delete $G_2 \setminus G_1$ and decrease $D_p$ by $|G_2\setminus G_1|$.
 Clearly, in each branch, we decrease $D_p$ by at least one.
 If there is a leaf in this branching tree for which the reduced family is hierarchical, then this leaf gives us required deletion set.
 The algorithm runs in $\Ohstar(3^{D_p})$-time.
\end{proof}

Unfortunately, we have the following intractability result.

\begin{theorem}\label{thm:partition-pb-xp-dg}
 \gpb is $\NP$-hard even when $D_g=2$ and $\ell=2$.
\end{theorem}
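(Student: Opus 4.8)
The plan is to give a polynomial-time reduction from \textsc{Partition}, which is $\NP$-complete: given positive integers $a_1,\dots,a_n$ with $\sum_{i=1}^n a_i = 2T$, decide whether there is an index set $S\subseteq[n]$ with $\sum_{i\in S}a_i = T$. The guiding design principle is dictated by the parameter regime itself. Since a bundle's utility is its total approval score, utilities are effectively encoded in unary and must stay polynomially bounded; hence all the computational hardness has to be pushed into the \emph{costs} and \emph{group budgets}, which may be large (binary-encoded). The reason two ``bad'' groups suffice is that a \textsc{Partition} solution is exactly a two-sided tightness condition ``$\le T$ and $\ge T$'', and the upper bound ``$\ge T$'' can be re-expressed as an upper bound ``$\le T$'' on the complementary part—so two budget constraints of value $T$ each, applied to two disjoint spanning groups, capture it.

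Concretely, for each $i\in[n]$ I would introduce two projects $p_i,q_i$, each of cost $a_i$, together with one dedicated voter approving only $p_i$ and one approving only $q_i$; then every project has approval score $1$ and the utility of any bundle $X$ equals $|X|$. I add $n$ pairwise-disjoint \emph{pair groups} $G_i=\{p_i,q_i\}$ with $b(G_i)=a_i$, and two disjoint \emph{spanning groups} $F_1=\{p_1,\dots,p_n\}$ and $F_2=\{q_1,\dots,q_n\}$ with $b(F_1)=b(F_2)=T$; finally set $B=2T$ and $u=n$. For the structural parameters: placing $\{G_1,\dots,G_n\}$ in one layer and $\{F_1,F_2\}$ in a second gives a valid layer decomposition, so $\ell\le 2$, while $G_i\cap F_1\neq\emptyset$ rules out a single layer, giving $\ell=2$. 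Deleting $F_1,F_2$ leaves the pairwise-disjoint (hence hierarchical) $G_i$'s, so $D_g\le 2$; and since each of $F_1,F_2$ overlaps every $G_i$ without containment, neither may be left undeleted, so $D_g=2$ exactly.

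For correctness I would argue both directions through one chain of implications. In any feasible bundle $X$, the budget $b(G_i)=a_i<2a_i$ forbids taking both $p_i$ and $q_i$, so $|X\cap G_i|\le 1$ and thus $|X|\le n$; combined with the utility requirement $\sum_{v\in\calV}|P_v\cap X|=|X|\ge u=n$, this forces exactly one project per pair. Writing $S=\{i:p_i\in X\}$, the constraints on $F_1$ and $F_2$ read $\sum_{i\in S}a_i\le T$ and $\sum_{i\notin S}a_i\le T$, which with $\sum_i a_i=2T$ force both to equal $T$, yielding a \textsc{Partition} solution; conversely, given such an $S$, the bundle $\{p_i:i\in S\}\cup\{q_i:i\notin S\}$ has cardinality $n$ (utility $n$), respects every $b(G_i)$, meets $b(F_1)=b(F_2)=T$ with equality, and has total cost $2T=B$. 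I expect the main obstacle to be meeting all three requirements simultaneously—$\NP$-hardness, $\ell=2$, and $D_g=2$—under the constraint that utilities are small: the pair groups $G_i$ are doing double duty, since without them the family would be merely $\{F_1,F_2\}$ (hierarchical, layerwidth $1$, polynomial-time solvable), so they are precisely what raises the structural distance to $2$ while also supplying the ``choose one per pair'' mechanism that turns two budget constraints into a faithful \textsc{Partition} gadget. Verifying that this mechanism enforces exactness (rather than allowing slack that would admit spurious feasible bundles) is the step that requires the most care.
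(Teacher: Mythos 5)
Your reduction is correct and is essentially the same as the paper's: the same \textsc{Partition} gadget with two equal-cost copies of each number, disjoint pair groups with budget $a_i$, and two spanning groups with budget $T$ forcing both halves to be tight. The only (immaterial) difference is that you use one dedicated voter per project while the paper uses a single voter approving everything; both make the utility of a bundle equal to its cardinality.
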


\begin{proof}
 We provide a polynomial-time reduction from the {\sc Partition} problem, in which, given a set of natural numbers $X=\{x_1,\ldots,x_{\tilde{n}}\}$, the goal is to partition $X$ into two parts, $X_1$ and $X_2$, such that the sum of the numbers in $X_1$ is same as the sum of the numbers in $X_2$.
 The {\sc Partition} problem is known to be $\NP$-hard~\cite{GareyJ79}.
 We construct an instance $(\calV,P,\calE,\calF,c,B,b,u)$ of \gpb as follows:
 for every integer $x_i\in X$, we add two projects in $P$, say $x_i^1$ and $x_i^2$, and set $c(x_i^1)=c(x_i^2)=x_i$.
 We define only one voter $v$ approving all the projects.
 For every $x_i \in X$, we add $\{x_i^1,x_i^2\}$ to $\calF$ and define $b(\{x_i^1,x_i^2\})=x_i$.
 Furthermore, we add sets $\{x_i^1\colon x_i\in X\}$ and $\{x_i^2\colon x_i\in X\}$ to $\calF$; and set $b(\{x_i^1\colon x_i\in X\})= b(\{x_i^2\colon x_i\in X\})=\sum_{x_i\in X} \nicefrac{x_i}{2}$.
 Let $B=\sum_{x_i\in X}x_i$ and $u=\tilde{n}$.
 Note that the layerwidth of $\calF$ is $2$ and by removing the sets $\{x_i^1\colon x_i\in X\}$ and $\{x_i^2\colon x_i\in X\}$ from $\calF$ leads to a hierarchical family.

 The main idea of the correctness is that the decision whether to fund project $x_i^1$ or project $x_i^2$ (as both cannot be funded) corresponds to deciding whether to take $x_i$ to the first part or the second part of the solution to the instance of {\sc Partition}.

 We next prove formally the equivalence between the instance $X$ of the {\sc Partition} problem and the instance  $(\calV,P,\calE,\calF,c,B,b,u)$ of \gpb.

 In the forward direction, let $(X_1,X_2)$ be a solution to $X$.
 We create a subset of projects, $S\subseteq X$, as follows.
 If $x_i\in X_1$, where $i\in [\tilde{n}]$, then we add $x_i^1$ to $S$, otherwise we add $x_i^2$ to $S$.
 Since $\lvert S \rvert = \tilde{n}$, clearly, the utility of $S$ is $\tilde{n}$.
 Since we pick either $x_i^1$ or $x_i^2$ in $S$, for every $i\in [\tilde{n}]$, for every set $F=\{x_i^1,x_i^2\}\in \calF$, $\sum_{p\in F\cap S} = x_i$.
 Since the sum of all the numbers in $X_1$ is  $\nicefrac{\sum_{x_i\in X}x_i}{2}$, for the set $F=\{x_i^1\colon x_i\in X\} \in \calF$, $\sum_{p\in F\cap S}= \nicefrac{\sum_{x_i\in X}x_i}{2}$.
 Similarly, for the set  $F=\{x_i^2\colon x_i\in X\} \in \calF$, $\sum_{p\in F\cap S}= \nicefrac{\sum_{x_i\in X}x_i}{2}$.

 In the backward direction, let $S$ be a solution to $(\calV,P,\calE,\calF,c,B,b,u)$.
 Since $u\geq \tilde{n}$, clearly, at least $\tilde{n}$ many projects belong to $S$.
 Note that for any $i\in [\tilde{n}]$, $x_i^1$ and $x_i^2$ both cannot belong to $S$ as $c(x_i^1)=c(x_i^2)=x_i$ and $b(\{x_i^1,x_i^2\})=x_i$.
 Thus, as there are only $2\tilde{n}$ many projects, either $x_i^1$ or $x_i^2$ belong to $S$, but not both.
 We construct a partition of $X=(X_1,X_2)$ as follows:
 if $x_i^1 \in S$, where $i\in [\tilde{n}]$, then add $x_i$ to $X_1$, otherwise add $x_i$ to $X_2$.
 Clearly, the sum of the numbers in $X_1$ is at most $\nicefrac{\sum_{x_i\in X}x_i}{2}$ as $b(\{x_i^1\colon x_i\in X\})=\nicefrac{\sum_{x_i\in X}x_i}{2}$.
 Similarly, the sum of the numbers in $X_2$ is at most $\nicefrac{\sum_{x_i\in X}x_i}{2}$.
 Now, suppose the sum of the numbers in at least one of these sets is less than $\nicefrac{\sum_{x_i\in X}x_i}{2}$.
 This implies that the total sum of the numbers in $X_1$ or $X_2$ is less than $\sum_{x_i\in X}x_i$, this contradicts that for every $i\in [\tilde{n}]$, either $x_i^1$ or $x_i^2$ is in $S$.
 This completes the proof.
\end{proof}

Nevertheless, combining with $s$ helps.

\begin{theorem}\label{thm:partition-pb-fpt-dg+s}
 \gpb is $\FPT$ wrt.\ $D_g+s$.
\end{theorem}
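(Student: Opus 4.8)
The plan is to combine the branching algorithm of Lemma~\ref{lem:D_g deletion set}, which finds a set $\mathcal{D}$ of at most $D_g$ groups whose deletion makes the family hierarchical, with the polynomial-time algorithm for hierarchical instances (Lemma~\ref{lemma:hier-pb-in-p}). First I would invoke Lemma~\ref{lem:D_g deletion set} in $\Ohstar(2^{D_g})$ time to obtain such a deletion set $\mathcal{D}$, and let $Q = \bigcup_{F \in \mathcal{D}} F$ be the set of all projects appearing in the deleted groups. Since each group has size at most $s$ and $|\mathcal{D}| \le D_g$, we have $|Q| \le D_g \cdot s$.

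The key idea is to brute-force the part of the solution inside $Q$. For every subset $Y \subseteq Q$ (there are at most $2^{D_g \cdot s}$ of them), I would interpret $Y$ as ``exactly the projects of $Q$ that are selected''. For each such guess I first verify the constraints imposed by the deleted groups directly: for every $F \in \mathcal{D}$ I check $\sum_{p \in F \cap Y} c(p) \le b(F)$, discarding the guess if any check fails. Since every group outside $\mathcal{D}$ belongs to the hierarchical family $\calF \setminus \mathcal{D}$, once the behaviour on $Q$ is fixed, what remains is a hierarchical \gpb instance on the free projects $P \setminus Q$.

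Concretely, given a valid guess $Y$, I would build a residual hierarchical instance as follows: delete from $P$ all projects of $Q$ (the projects in $Y$ are forced in, those in $Q \setminus Y$ are forced out); restrict every surviving group $F \in \calF \setminus \mathcal{D}$ to $F \setminus Q$ and lower its budget to $b(F) - \sum_{p \in F \cap Y} c(p)$; lower the global budget to $B - \sum_{p \in Y} c(p)$; and lower the required utility to $u - \sum_{p \in Y} a(p)$, where $a(p)$ is the approval score of $p$. If any budget becomes negative the guess is infeasible. The restricted family $\{F \setminus Q : F \in \calF \setminus \mathcal{D}\}$ is still hierarchical, since subtracting the same set $Q$ from all groups preserves both containment and disjointness (dropping empty or duplicate groups). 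Hence the residual instance can be solved in polynomial time by Lemma~\ref{lemma:hier-pb-in-p}, and the original instance is a yes-instance iff some guess $Y$ yields a feasible residual solution meeting the reduced utility target.

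For the running time, the branching of Lemma~\ref{lem:D_g deletion set} costs $\Ohstar(2^{D_g})$, and the main loop performs $2^{|Q|} \le 2^{D_g \cdot s}$ iterations, each running the polynomial-time hierarchical solver once, for a total of $\Ohstar(2^{D_g \cdot s})$. Because $D_g \cdot s \le (D_g + s)^2$, this is bounded by $f(D_g + s) \cdot \mathrm{poly}$ and thus is $\FPT$ with respect to $D_g + s$. I expect the main obstacle to be the bookkeeping in the residual-instance construction: one must argue carefully that pre-charging the forced-in projects $Y$ against each group's budget (and against the global budget and the utility target) exactly captures their effect in every constraint they participate in, and that the hierarchical structure of $\calF \setminus \mathcal{D}$ genuinely survives the simultaneous deletion of $Q$, so that Lemma~\ref{lemma:hier-pb-in-p} applies verbatim.
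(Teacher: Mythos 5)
Your proposal is correct and follows essentially the same strategy as the paper's proof: compute a deletion set of groups, enumerate the at most $2^{D_g \cdot s}$ subsets of projects occurring in the deleted groups, and solve the residual hierarchical instance with reduced budgets and utility target via Lemma~\ref{lemma:hier-pb-in-p}, for a total running time of $\Ohstar(2^{D_g \cdot s})$. Your write-up is in fact more detailed than the paper's (which leaves the residual-instance bookkeeping and the preservation of hierarchy under removal of $Q$ implicit), and those added details are all sound.
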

\begin{proof}
 We can try all possible subsets of the deletion set;
 then, for each deletion set, we can try all possibilities of which projects to fund (i.e., going over all subsets of projects in the union of the groups in the deletion set); then, for the remaining budget, we solve using the polynomial-time algorithm for \hpb, as described in the proof of Lemma~\ref{lemma:hier-pb-in-p}.
 The running time of the algorithm is $\Ohstar(2^{D_g \cdot s})$.
\end{proof}

In contrast to Theorem \ref{thm:partition-pb-xp-dg}, parametrization by the delete-project-distance of an instance to be hierarchical is tractable.

\begin{theorem}\label{thm:partition-pb-fpt-d}
 \gpb is $\FPT$ wrt.\ $D_p$.
\end{theorem}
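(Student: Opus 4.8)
The plan is to combine the two preceding lemmas: first use Lemma~\ref{lem:D_p deletion set} to obtain a small set of ``bad'' projects whose removal makes the instance hierarchical, then brute-force over the funding decisions for these few projects, reducing each branch to an instance of \hpb solvable by Lemma~\ref{lemma:hier-pb-in-p}.

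First I would invoke Lemma~\ref{lem:D_p deletion set} to compute, in $\Ohstar(3^{D_p})$ time, a set $Z \subseteq P$ with $|Z| \le D_p$ such that deleting $Z$ from every group yields a hierarchical family $\{F \setminus Z : F \in \calF\}$. The key observation is that a feasible solution $X$ is completely determined on $Z$ by the subset $Y = X \cap Z$, and that for each fixed choice of $Y$ the residual problem over $P \setminus Z$ is hierarchical. Since there are only $2^{D_p}$ possible choices of $Y$, I would iterate over all of them.

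For a fixed guess $Y \subseteq Z$ of which deleted projects to fund, I would first check feasibility of $Y$ on its own: for every group $F \in \calF$ verify that $\sum_{p \in F \cap Y} c(p) \le b(F)$, and that $\sum_{p \in Y} c(p) \le B$; discard the guess if either fails. Otherwise I would build a residual \hpb instance on the ground set $P \setminus Z$ with the hierarchical family $\{F \setminus Z : F \in \calF\}$, residual group budgets $b'(F \setminus Z) = b(F) - \sum_{p \in F \cap Y} c(p)$ (taking the minimum over all groups mapping to the same restricted set, and treating the global budget $B$ as the group $P$ as in the proof of Lemma~\ref{lemma:hier-pb-in-p}), and residual target utility $u - a(Y)$, where $a(Y) = \sum_{p \in Y} a(p)$ is the total approval score of $Y$. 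Since funding $Y$ contributes exactly $\sum_{p \in F \cap Y} c(p)$ to each group budget and exactly $a(Y)$ to the utility, and since the projects chosen outside $Z$ never lie in $Z$, a bundle $X'$ feasible for this residual hierarchical instance combines with $Y$ into a feasible bundle $X = Y \cup X'$ of the required utility for the original instance, and conversely; this correspondence yields the equivalence. By Lemma~\ref{lemma:hier-pb-in-p} each residual instance is solved in polynomial time, so the total running time is $\Ohstar(3^{D_p} + 2^{D_p} \cdot {\sf poly}) = \Ohstar(3^{D_p})$, establishing fixed-parameter tractability wrt.\ $D_p$.

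I expect the only delicate point to be the bookkeeping of the budget constraints: one must ensure that each group budget is charged consistently between the fixed part $Y$ and the residual part, so that the restricted-family budgets are exactly the leftover amounts, and that restricted sets which coincide or become empty are handled correctly (by taking minima, and by the stand-alone feasibility check on $Y$, which covers every $F \subseteq Z$). Everything else reduces directly to the two cited lemmas.
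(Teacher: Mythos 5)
Your proposal is correct and follows exactly the paper's argument: compute the deletion set via Lemma~\ref{lem:D_p deletion set}, enumerate the $2^{D_p}$ funding choices on it, and solve each residual hierarchical instance with updated budgets via Lemma~\ref{lemma:hier-pb-in-p}. The extra bookkeeping you spell out (the stand-alone feasibility check on $Y$, taking minima over groups that collapse to the same restricted set) is a welcome elaboration of details the paper leaves implicit.
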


\begin{proof}
 We can use Lemma~\ref{lem:D_p deletion set} to find the deletion set.
 Then we can go over all possibilities of which projects to fund from the deletion set and solve the remaining hierarchical instance, after updating the budgets, using the polynomial-time algorithm of Lemma~\ref{lemma:hier-pb-in-p}.
\end{proof}

\section{Tractability with Few Groups}\label{sec:tractability2}

Next we concentrate on the number $g$ of groups as a parameter as, indeed, the groups are the new ingredient we bring to the standard model of PB.
Practically, the number $g$ of groups may be set by the entity organizing the PB process, thus can be as small as the organizer wishes.
First, we have the following result for the parameter $g$.

\begin{theorem}\label{thm:xp-g}
\gpb is $\XP$ wrt.\ $g$.
\end{theorem}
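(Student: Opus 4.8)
The plan is to exploit the fact that the voter-satisfaction objective is additive across projects and that, for a fixed number of groups, the projects fall into only few ``membership types.'' First I would observe that the utility of a bundle $X$ rewrites as $\sum_{v \in \calV} |P_v \cap X| = \sum_{p \in X} a(p)$, where $a(p)$ is the approval score of $p$; thus \gpb asks for a bundle maximizing an additive objective subject to the $g+1$ upper-bound constraints (one per group $F_i$, plus the global budget $B$), exactly matching the \xdk{(g+1)} view from the introduction. For a project $p$ let its \emph{type} be $S_p = \{ i \in [g] : p \in F_i\}$, and group the projects into \emph{cells} $C_1,\dots,C_r$, one per distinct type occurring in $P$; since every type is a subset of $[g]$, we have $r \le 2^g$. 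Writing $S(C_j)$ for the common type of the projects in $C_j$, the key structural fact is that a group's consumption depends only on the cells it contains: for any bundle $X$ and any $i$, $\sum_{p \in F_i \cap X} c(p) = \sum_{j : i \in S(C_j)} c(X \cap C_j)$.

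Next I would reduce the coupling between cells to a single ``utility profile.'' For each cell $C_j$ and each target $w \in \{0,1,\dots,u\}$, define $\mu_j(w)$ to be the minimum cost $\sum_{p \in Y} c(p)$ over subsets $Y \subseteq C_j$ with $\sum_{p \in Y} a(p) = w$ (and $\mu_j(w) = \infty$ if no such subset exists). Each $\mu_j$ is computed by a standard one-dimensional (unary-utility) knapsack DP in $\Oh(|C_j| \cdot u)$ time, so all of them together take $\Oh(m u)$ time; this is the same min-cost-for-fixed-utility idea already used in the proof of Lemma~\ref{lemma:hier-pb-in-p}.

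Then the algorithm enumerates all utility profiles $(w_1,\dots,w_r) \in \{0,\dots,u\}^r$ and accepts iff some profile satisfies $\sum_{j=1}^r w_j \ge u$, the global constraint $\sum_{j=1}^r \mu_j(w_j) \le B$, and, for every group $i \in [g]$, the constraint $\sum_{j : i \in S(C_j)} \mu_j(w_j) \le b(F_i)$. Correctness in the backward direction is immediate: a profile meeting all constraints yields, cell by cell, witnessing subsets whose union is a feasible bundle of utility $\ge u$. For the forward direction, given any feasible bundle $X$ of utility $\ge u$, set $w_j = \sum_{p \in X \cap C_j} a(p)$; then $\sum_j w_j \ge u$ and, because $\mu_j(w_j) \le c(X \cap C_j)$, replacing each $X \cap C_j$ by a min-cost witness can only decrease the cost consumed in every group and in the global budget while preserving the utility. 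This is precisely where the upper-bound nature of all constraints matters: minimizing cost per cell is simultaneously optimal for all of them, so the min-cost profile is feasible whenever $X$ is. Since $r \le 2^g$, there are at most $(u+1)^{2^g}$ profiles, and $u \le nm$ may be assumed (otherwise the instance is a trivial no-instance), so the whole procedure runs in time $|x|^{\Oh(2^g)}$, establishing membership in $\XP$ wrt $g$.

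The main obstacle is the forward-direction ``min-cost-is-without-loss-of-generality'' argument together with keeping the state space small: one must verify that collapsing each cell to a single utility value---rather than tracking the full $(g+1)$-dimensional cost consumption, which would yield only a pseudo-polynomial bound---loses nothing, and this relies precisely on (i) additivity of utility across projects and (ii) all $g+1$ constraints being upper bounds on cost. A secondary point to get right is the bound $r \le 2^g$ on the number of nonempty cells, which is exactly what turns the naive $(u+1)^r$ enumeration into a genuine $\XP$ (rather than merely pseudo-polynomial) running time.
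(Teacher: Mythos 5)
Your proposal is essentially the paper's own proof: the paper likewise partitions the projects into at most $2^g$ membership types, runs a per-type min-cost knapsack dynamic program indexed by utility, enumerates the at most $(u+1)^{2^g}$ utility profiles, and argues that replacing each type's contribution by a min-cost witness can only decrease the cost consumed in every budget constraint, all of which are upper bounds. The one place where your version is slightly off is the combination of ``utility exactly $w$'' with the range $w\in\{0,\dots,u\}$: in the forward direction you set $w_j=\sum_{p\in X\cap C_j}a(p)$, which may exceed $u$ for some cell (e.g.\ a cell whose only selected project has approval score $u+1$), in which case the witnessing profile is never enumerated, and capping $w_j$ at $u$ does not help because utility exactly $u$ may be unachievable within that cell. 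The paper sidesteps this by defining the per-type DP value as the cheapest bundle achieving utility \emph{at least} $\mu$ and matching the fixed optimal solution type-by-type via $\min\{u^*(R),u\}$; alternatively, you could let $w$ range up to the total approval score $A\le nm$, which still gives running time $|x|^{\Oh(2^g)}$ and hence membership in $\XP$ wrt.\ $g$.
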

\begin{proof}
The algorithm follows from an enumeration over project types $t_R$, $R \subseteq \calF$, where $t_R$ is defined as follows:
  a project of type $t_R$ belongs to all groups in $R$ and to none of the groups in $\calF \setminus R$.
  We override the notation and write $p \in t_R$ to say that project $p \in P$ has type $t_R$.
  Note that each project has exactly one type out of the total of $2^g$ types of projects.
  For every project type $t_R$, and for every value $\mu \in \{0, 1, 2, \ldots, u\}$, we use standard dynamic programming to find the cheapest set of projects of type $t_R$ that achieves at least utility $\mu$.
  (For each of the $2^g$ types, this takes polynomial-time.)
  Then, we go over all possibilities of achieving some utility from each of the types.
  As we have $u+1$ possibilities for each project type, it follows that, in total, we consider at most $(u+1)^{2^g}$ cases.
  If some solution is feasible then the instance is a yes-instance, otherwise it is a no-instance.

  For correctness we focus only on yes-instances (for no-instances the correctness is straightforward).
  Fix some feasible solution $X^*$ to the instance.
  It achieves the desired total utility $u$ by getting some utility $u^*(R) \in \mathbb{N}$ from projects of type $t_R$.
  Note that our algorithm considers a solution $X$ constructed in a case when it receives utility at least $\min\{u^*(R),u\}$ from projects of type $t_R$.
  We will argue that this solution is feasible. $X$ achieves utility of at least $\sum_{R \subseteq \calF} \min\{u^*(R),u\} \geq u$, where the inequality holds because of the following:
  (1) if there exists some $R \subseteq \calF$ such that $u^*(R) \geq u$,
  then this is straightforward;
  (2) if for all $R \subseteq \calF$ we have that $u^*(R) < u$,
  then this follows from the feasibility of $X^*$,
  i.e., $\sum_{R \subseteq \calF} u^*(R) \geq u$.
  For group budget limits we note that $X$ is constructed from min-cost bundles of projects of type $t_R$ that achieve utility of at least $\min\{u^*(R),u\}$.
  It means that the cost of projects of type $t_R$ in $X$ cannot be more expensive than the cost of projects of type $t_R$ in $X^*$.
  By aggregating this cost for each group $F \in \calF$ separately we get that
  $\sum_{p \in F \cap X} c(p) = \sum_{R: F \in R \subseteq \calF}\sum_{p \in t_R \cap X} c(p) \leq \sum_{R: F \in R \subseteq \calF}\sum_{p \in t_R \cap X^*} c(p) = \sum_{p \in F \cap X^*} c(p) \leq b(F)$.
  Analogously we show feasibility regarding a global budget.

  Since the running time of the algorithm above is $\Ohstar((u+1)^{2^g})$, we finish the proof.
\end{proof}

Unfortunately, we do not know whether \gpb is $\FPT$ wrt.\ $g$; indeed, this is the main question left open.

\begin{openquestion}\label{thm:fpt-g}
Is \gpb $\FPT$ wrt.\ $g$?
\end{openquestion}

Note, however, that in Subsection~\ref{sec:w-hardness-utility} we provide a $\W$-hardness wrt.\ $g$ proof, albeit for a slightly more general problem, in which we are  also given utility requirements for each group.

Next we consider combined parameters.

\begin{theorem}\label{thm:fpt-u+g}
 \gpb is $\FPT$ wrt.\ $g+u$.
\end{theorem}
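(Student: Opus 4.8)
The plan is to leverage the $\XP$ algorithm of Theorem~\ref{thm:xp-g} and observe that, once $u$ is also a parameter, its running time becomes a function of $g+u$ times a polynomial. Recall that the algorithm enumerates, for each of the $2^g$ project types $t_R$ (indexed by subsets $R \subseteq \calF$), a target utility value; since each type may contribute any utility in $\{0,1,\dots,u\}$, the number of cases considered is at most $(u+1)^{2^g}$, and for each case we invoke a polynomial-time dynamic program computing the cheapest bundle of projects of each type achieving the prescribed utility. Thus the total running time is $\Ohstar((u+1)^{2^g})$.

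The key observation is that $(u+1)^{2^g}$ is bounded by a computable function of $g+u$ alone: writing $f(g,u) = (u+1)^{2^g}$, this quantity depends only on $g$ and $u$ and not on the remaining input size $m$, $n$, etc. Hence the running time $\Ohstar(f(g,u))$ is of the form $f(g+u) \cdot {\sf poly}(|x|)$, which is precisely the definition of an $\FPT$ algorithm with respect to the combined parameter $g+u$. The correctness of the procedure is already established in the proof of Theorem~\ref{thm:xp-g}, so no new correctness argument is needed; I would simply cite that proof and re-read its running-time bound through the lens of the combined parameter.

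Concretely, I would structure the proof as a short reduction to the previous result: first recall the algorithm and its $\Ohstar((u+1)^{2^g})$ running time from Theorem~\ref{thm:xp-g}; then note that $(u+1)^{2^g}$ is a computable function of $g$ and $u$ only; and conclude that this yields fixed-parameter tractability with respect to $g+u$. The main (and essentially only) subtlety to flag is that the exponent $2^g$ is double-exponential in $g$ when combined with the $(u+1)$ base, so the dependence on the parameters is large, but this does not affect $\FPT$ membership since $\FPT$ permits arbitrary computable functions of the parameter; I would state this explicitly to preempt any concern.

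\begin{proof}
The algorithm of Theorem~\ref{thm:xp-g} solves \gpb in time $\Ohstar((u+1)^{2^g})$. Indeed, it enumerates a target utility value in $\{0,1,\dots,u\}$ for each of the $2^g$ project types, giving at most $(u+1)^{2^g}$ cases, and for each case computes in polynomial time the cheapest bundle realizing the prescribed per-type utilities. Since $(u+1)^{2^g}$ is a computable function of $g$ and $u$ alone, the running time is of the form $f(g+u) \cdot {\sf poly}(|x|)$ for a computable function $f$. Hence the same algorithm is an $\FPT$ algorithm with respect to the combined parameter $g+u$, and the correctness argument of Theorem~\ref{thm:xp-g} applies verbatim.
\end{proof}
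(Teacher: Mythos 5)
Your proposal is correct and is essentially identical to the paper's own proof, which likewise just observes that the algorithm of Theorem~\ref{thm:xp-g} runs in $\Ohstar((u+1)^{2^g})$ time and that this bound is a computable function of $g$ and $u$ alone. Nothing to add.
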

\begin{proof}
 Follows from the proof of Theorem~\ref{thm:xp-g} as the running time of the algorithm is $\Ohstar((u+1)^{2^g})$.
\end{proof}

Careful Mixed Integer Linear Programming (MILP) formulation implies the following.

\begin{theorem}\label{thm:fpt-g+n}
\gpb is $\FPT$ wrt.\ $g+n$.
\end{theorem}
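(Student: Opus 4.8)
The plan is to formulate \gpb as a mixed integer linear program (MILP) in which the number of \emph{integer} variables is bounded by a function of $g+n$, and then invoke the classical fixed-parameter tractability of (mixed) integer linear programming parameterized by the number of integer variables~\cite{CyganFKLMPPS15}. The starting point is to classify projects into types. I would say that a project $p$ has \emph{type} $(R,a)$, where $R \subseteq \calF$ is the set of groups containing $p$ and $a=a(p) \in [n]$ is its approval score; there are at most $2^{g}\cdot n$ types, which is bounded by a function of $g+n$. Two projects of the same type are interchangeable as far as the problem is concerned: each contributes exactly $a$ to the total utility and charges its cost to the global budget and to precisely the group budgets indexed by the groups in $R$.

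First I would introduce, for every type $(R,a)$, an integer variable $y_{R,a} \in \{0,1,\dots,N_{R,a}\}$ counting how many projects of that type are funded, where $N_{R,a}$ is the number of available projects of that type, together with a continuous (non-integer) variable $C_{R,a} \ge 0$ representing the total cost charged by the funded projects of that type. The constraints are: the utility requirement $\sum_{R,a} a\, y_{R,a} \ge u$; the global budget $\sum_{R,a} C_{R,a} \le B$; and, for each group $F \in \calF$, the constraint $\sum_{(R,a)\,:\, F \in R} C_{R,a} \le b(F)$. The observation that makes this sound is that, within a single type, all projects are equivalent except for their cost, and every budget constraint is an upper bound; hence in any feasible solution we may assume the $y_{R,a}$ funded projects of type $(R,a)$ are the $y_{R,a}$ cheapest ones, so that $C_{R,a}$ equals the prefix sum $\phi_{R,a}(y_{R,a})$ of the sorted costs within that type.

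The step I expect to be the main obstacle is linking $C_{R,a}$ to $y_{R,a}$ \emph{linearly}, since $\phi_{R,a}$ is not a linear function of the count. However, $\phi_{R,a}$ is convex, nondecreasing, and piecewise linear (its successive slopes are the sorted costs $c_1 \le c_2 \le \cdots$), so its epigraph is a polyhedron. Concretely, for each type and each $j \in \{0,1,\dots,N_{R,a}-1\}$ I would add the supporting-hyperplane constraint $C_{R,a} \ge \phi_{R,a}(j)+c_{j+1}\,(y_{R,a}-j)$, where $c_{j+1}$ is the $(j+1)$-th smallest cost in the type: by convexity every such line underestimates $\phi_{R,a}$, and their pointwise maximum equals $\phi_{R,a}$ at every integer argument, so these constraints are satisfiable with $C_{R,a}=\phi_{R,a}(y_{R,a})$ while forcing $C_{R,a} \ge \phi_{R,a}(y_{R,a})$. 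Since the budget constraints only push each $C_{R,a}$ downward, the MILP is feasible if and only if the \gpb instance is a yes-instance. This adds only polynomially many constraints per type, which is harmless: the FPT result for MILP requires only the number of \emph{integer} variables—here at most $2^{g}\cdot n$—to be bounded, whereas the number of constraints and of continuous variables may be polynomial. Recovering a bundle from an MILP solution (take the cheapest $y_{R,a}$ projects of each type) then completes the argument and yields an $\FPT$ algorithm with respect to $g+n$.
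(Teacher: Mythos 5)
Your proposal is correct and follows the same overall strategy as the paper: classify projects into types $(R,a)$ determined by group membership and approval score, set up an MILP whose only integer variables are the per-type counts (at most $2^g\cdot n$ of them), and invoke the fixed-parameter tractability of MILP in the number of integer variables. The one place where you diverge is the encoding of the cost of funding $k$ projects of a given type. The paper introduces one continuous $[0,1]$ variable per project (indicating whether the $i$-th cheapest project of its type is taken), links their sum to the integer count, and then proves a separate lemma that any optimal fractional solution can be rounded to an integral one without increasing any budget usage, precisely because the costs within a type are sorted. You instead keep a single continuous variable $C_{R,a}$ per type and capture the prefix-sum cost function $\phi_{R,a}$ via its supporting hyperplanes, relying on the convexity of $\phi_{R,a}$ and the fact that all budget constraints are upper bounds so that $C_{R,a}$ may be driven down to exactly $\phi_{R,a}(y_{R,a})$ at integer arguments. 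Both devices are sound and add only polynomially many continuous variables and constraints; yours avoids the integrality-restoration lemma at the price of verifying that the pointwise maximum of the tangent lines agrees with $\phi_{R,a}$ at every integer, which you do correctly. One cosmetic remark: you treat the decision version with the utility requirement as a constraint, whereas the paper maximizes utility as the objective; either is fine for establishing the \FPT{} claim.
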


\begin{proof}
Recall that w.l.o.g. we assume that every project is approved by at least one voter.
We construct an MILP in the following way.
We define a type of a project by a pair $(R,w)$, where $R \subseteq \calF$ and $w \in [n]$.
A project of type $(R,w)$ belongs to all the groups in $R$ (and to none of the groups in $\calF \setminus R$) and it is approved by exactly $w$ voters.
Note that we have $n \cdot 2^g$ types of projects.

We define an integer variable $x_{R,w}$ for all $R \subseteq \calF$ and $ w \in [n]$, meaning how many projects of type $(R,w)$ are in a solution.
Let $|(R,w)|$ be the number of projects of type $(R,w)$.

Next, we split $x_{R,w}$ into a sum of $|(R,w)|$ real variables:
\begin{equation}\label{milp:xtoy}
  x_{R,w} = \sum_{i \in [|(R,w)|]} y_{R,w,i},
\end{equation}
where $y_{R,w,i} \in [0,1]$ is a continuous extension of a binary variable that indicates whether we take the $i$-th cheapest projects of type $(R,w)$ to a solution.
From equation~\eqref{milp:xtoy} we get also $x_{R,w} \in \{0,1,\dots, |(R,w)|\}$.
Note that we have $\sum_{R \subseteq \calF} \sum_{w \in [n]} |(R,w)| = m$ real variables $y_{R,w,i}$ because each project has exactly one type.

We need to implement the budget function.
We write a constraint for each group $F \in \calF$:
\begin{equation}\label{milp:groupbudgets}
  \sum_{R: F \in R \subseteq \calF} \sum_{w \in [n]} \sum_{i \in [|(R,w)|]} y_{R,w,i} \cdot c(R,w,i) \leq b(F),
\end{equation}
where $c(R,w,i)$ is the cost of the $i$-th cheapest project of type $(R,w)$.
Similarly, we add a global budget limit constraint as follows:
\begin{equation}\label{milp:globalbudget}
  \sum_{R \subseteq \calF} \sum_{w \in [n]} \sum_{i \in [|(R,w)|]} y_{R,w,i} \cdot c(R,w,i) \leq B.
\end{equation}

The remaining ingredient of MILP is the objective function:
\begin{equation}\label{milp:objective}
  \max \sum_{R \subseteq \calF} \sum_{w \in [n]} w \cdot x_{R,w}.
\end{equation}

We can transform any optimal solution $(x^*,y^*)$ of the MILP into an optimal solution $(x^*,y^{\integer})$ consisting of integer variables only.
In particular, we define $y^{\integer}_{R,w,i} = 1$ for $i \in \{1,\dots, x^*_{R,w}\}$ and $y^{\integer}_{R,w,i} = 0$ for $i \in \{x^*_{R,w}+1, \dots, |(R,w)|\}$.

\paragraph{Optimality.} The objective value of such a new solution $(x^*,y^{\integer})$ is the same as for $(x^*,y^*)$ (hence optimal) because the objective function depends only on $x_{R,w}$ and both the solutions have the same value for variables $x_{R,w}$.

\paragraph{Feasibility.} We have
\begin{align*}
    \sum_{i \in [|(R,w)|]} y^{\integer}_{R,w,i}
  = \sum_{i = 1}^{x^*_{R,w}} y^{\integer}_{R,w,i} + \sum_{i = x^*_{R,w}+1}^{|(R,w)|} y^{\integer}_{R,w,i}
  = \sum_{i = 1}^{x^*_{R,w}} 1                    + \sum_{i = x^*_{R,w}+1}^{|(R,w)|} 0
  = x^*_{R,w}.
\end{align*}
hence the integer solution $(x^*,y^{\integer})$ is feasible for equation~\eqref{milp:xtoy}.

We have $c(R,w,i) \leq c(R,w,i+1)$ and $\sum_{i \in [|(R,w)|]} y^{\integer}_{R,w,i} = x^*_{R,w} = \sum_{i \in [|(R,w)|]} y^*_{R,w,i}$,
hence we get
$$\sum_{i \in [|(R,w)|]} y^{\integer}_{R,w,i} \cdot c(R,w,i) \leq \sum_{i \in [|(R,w)|]} y^*_{R,w,i} \cdot c(R,w,i).$$
From this and feasibility of $(x^*,y^*)$ we get that the integer solution $(x^*,y^{\integer})$ is also feasible for equations~\eqref{milp:groupbudgets} and \eqref{milp:globalbudget}.

\paragraph{Running Time.}
The MILP has at most
$n \cdot 2^g$ integer variables,
$m$ real variables and $n \cdot 2^g +2m+g+1$ constraints
($n \cdot 2^g$ many constraints of type~\eqref{milp:xtoy}, $2m$ many constraints for variables $y_{R,w,i}$, $g$ many group budget constraints~\eqref{milp:groupbudgets} and $1$ global budget constraint~\eqref{milp:globalbudget}).
We can solve MILP in time $\FPT$ wrt. $p$,
where $p$ is the number of integer variables~\cite{lenstra1983integer}.
In particular, our MILP can be solved in $p^{\Oh(p)} \cdot |I|^{\Oh(1)} \leq \Ohstar((n \cdot 2^g)^{\Oh(n \cdot 2^g)}) \leq \Ohstar(2^{n \log(n) \cdot 2^{\Oh(g)}})$ time~\cite{Kannan87,BredereckFNST20}.
\end{proof}

\begin{remark}
Note that $n \cdot 2^g$ is only an upper-bound for the number of integer variables in the MILP.
Indeed, we can write a more strict upper-bound.
Let $t$ be the number of non-empty types $(R,w)$, i.e.,
$t = |\{(R,w): R \subseteq \calF, w \in [n], |(R,w)|>0 \}|$.
Let $A$ be the maximum approval score over projects, i.e.,
$A = \max_{p \in P} |\{ v \in \calV: p \in P_v \}|$.
Constraint~\eqref{milp:xtoy} sets $x_{R,w}$ to be $0$ for empty types $(R,w)$.
It means that we have at most $A \cdot t$ free integer variables, hence \gpb is $\FPT$ wrt.\ $A \cdot t$.
\end{remark}

Also combining $g$ with the budget $B$ helps.

\begin{theorem}\label{thm:fpt-g+b}
 \gpb is $\FPT$ wrt.\ $g+B$.
\end{theorem}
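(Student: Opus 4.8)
The plan is to solve \gpb via a dynamic program over budget consumption in each group, exploiting the fact that both the number of groups $g$ and the global budget $B$ are parameters. First I would note that, since $b(F) \leq B$ for every $F \in \calF$ (by our w.l.o.g.\ assumption), any feasible solution consumes at most $B$ units of budget in each of the $g$ groups and at most $B$ units globally. This means that the entire ``budget profile'' of a partial solution can be encoded as a vector in $\{0,1,\dots,B\}^{g+1}$, where the first $g$ coordinates track how much cost has been spent inside each group $F \in \calF$ and the last coordinate tracks the total (global) cost. There are at most $(B+1)^{g+1}$ such profiles, which is bounded by a function of the parameter $g+B$ times a polynomial factor, so enumerating or indexing by profiles is affordable.

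Next I would set up the dynamic programming table. As remarked in the related-work discussion, \gpb is a special case of \xdk{(g+1)}, and the standard textbook dynamic program for $d$-dimensional knapsack~\cite[Section 9.3.2]{knapsackbook} applies directly. Concretely, I would fix an arbitrary order $p_1,\dots,p_m$ of the projects and define $T[j,\vec{v}]$ to be the maximum achievable utility using only projects among $p_1,\dots,p_j$ such that the consumed-budget vector is exactly $\vec{v} \in \{0,1,\dots,B\}^{g+1}$ (using $\infty$ or $-\infty$ sentinels for infeasible profiles). For the recurrence, when considering project $p_j$ I would branch on whether to include it: excluding it copies $T[j-1,\vec v]$, while including it looks up $T[j-1,\vec v - \vec\gamma_{p_j}]$ and adds the approval score $a(p_j)$, where $\vec\gamma_{p_j}$ is the $(g+1)$-dimensional size vector with $\gamma_{p_j}(i) = c(p_j)$ if $p_j \in F_i$ (and in coordinate $g+1$ always $c(p_j)$), and $0$ otherwise. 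Inclusion is only permitted when $\vec v - \vec\gamma_{p_j} \geq \vec 0$ coordinatewise, which automatically enforces every group budget and the global budget simultaneously. The answer is then $\max\{\,\mu : \exists\,\vec v \text{ with } v(i) \leq b(F_i)\ \forall i\le g,\ v(g+1)\le B,\ T[m,\vec v] = \mu\,\} \geq u$, or equivalently I would simply cap each coordinate at its budget limit during the DP.

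For the running time, the table has $(m+1)\cdot(B+1)^{g+1}$ entries and each entry is computed in constant time from two previously-computed entries, so the total time is $\Oh(m \cdot (B+1)^{g+1})$, which is $\Ohstar((B+1)^{g+1})$ and hence $\FPT$ with respect to $g+B$. Correctness follows by a routine induction on $j$: the value $T[j,\vec v]$ equals the maximum utility over all subsets of $\{p_1,\dots,p_j\}$ whose induced budget profile is $\vec v$, which is immediate from the include/exclude case split and the additivity of both costs and approval scores over disjoint choices of projects.

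I do not expect any genuine obstacle here, as this is essentially an invocation of the standard multidimensional-knapsack DP specialized to our setting; the only point requiring a moment's care is the encoding of the profile and the observation that, because $b(F) \le B$, we never need more than $B+1$ values per coordinate, which keeps the table size within the claimed parameterized bound. This matches the running time sketched in the (removed) footnote, namely $\Ohstar(n \cdot (b_{\max}+1)^g (B+1)) \leq \Ohstar((B+1)^{g+1})$, where one could even use the tighter per-group bound $b_{\max}+1$ instead of $B+1$ in the first $g$ coordinates if desired.
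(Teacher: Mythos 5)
Your proposal is correct and is essentially the paper's own argument: the paper also observes $b_{\max}\leq B$ and invokes the standard dynamic program for \xdk{(g+1)} from Kellerer et al., with the same $\Ohstar((B+1)^{g+1})$ bound. You merely spell out the table and recurrence that the paper cites as a black box.
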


\begin{proof}
 Recall that w.l.o.g. we assumed that $b(F) \leq B$ for every $F \in \calF$, hence $b_{\max} \leq B$.
 We can apply the dynamic programming algorithm for \xdk{(g+1)}, which runs in time upper-bounded by $\Ohstar{(n \cdot (b_{\max}+1)^g (B+1))} \leq \Ohstar((B+1)^{(g+1)})$ \cite[Section 9.3.2]{knapsackbook}.
\end{proof}

\begin{remark}
We mention that the running time claimed by Kellerer \textit{et al.}~\cite[Section 9.3.2]{knapsackbook} is $\Ohstar{(n \cdot b_{\max}^g \cdot B)}$.
This would give a polynomial time algorithm for hard instances after the reduction in Theorem~\ref{thm:paraNP-l+s+f} because there we have $b_{\max}=1$.
It would imply $\P=\NP$.
The logical flaw of Kellerer \textit{et al.}~\cite{knapsackbook} comes from counting the size of the DP table, in particular,
in every dimension we should count the number of sub-problems by considering all possible values for a budget, i.e.,
all integer numbers between $0$ and the budget limit in a dimension.
Indeed, asymptotically this $+1$ difference usually does not matter; for us, however, it is crucial.
\end{remark}

We observe that the running time in the proof of Theorem~\ref{thm:fpt-g+b} is $\Ohstar{((b_{\max}+1)^g (B+1))}$, which is not $\FPT$ wrt.\ $g+b_{\max}$, since $B$ may be non-polynomial in the input size.
In light of Open Question~\ref{thm:fpt-g}, analyzing the parameterized complexity of \gpb wrt.\ $g+b_{\max}$ could serve as a valuable intermediate step.

\subsection{FPT Approximation Scheme for g}

Recall that \gpb is $\XP$ wrt.\ $g$ (Theorem~\ref{thm:xp-g}) and recall our open question regarding whether \gpb is $\FPT$ wrt.\ $g$ (Open Question~\ref{thm:fpt-g}).
Next we show an approximation scheme for \mgpb that is $\FPT$ wrt.\ $g$
(compare this result also to that described later, in Theorem~\ref{thm:sqrt-g-apx},
showing that there does not exist a constant-factor approximation algorithm unless $\P=\NP$, even if $g$ is as small as $m^2$).

In particular, our approximation notion is the following:
an algorithm has an approximation factor $\alpha \geq 1$ if it always outputs a solution that has at most $\alpha$ factor less utility than the optimal solution.

\begin{theorem}\label{thm:fpt-as-simpler}
  There exists an algorithm that for any fixed $\epsilon>0$ finds an $(1+\epsilon)$-approximate solution to \mgpb in $\FPT$ time wrt.~$g$.
\end{theorem}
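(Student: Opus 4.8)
The plan is to reuse, essentially verbatim, the exact $\XP$ algorithm behind Theorem~\ref{thm:xp-g}, and to turn its only expensive step into an $\FPT$ one by searching over a sparse, geometrically spaced grid of per-type utility targets. Recall that in that algorithm the projects are partitioned into the $2^{g}$ types $t_R$, $R\subseteq\calF$; for each type and each integer $\mu\in\{0,\dots,u\}$ one precomputes, by a polynomial knapsack-style dynamic program, the value $\cost_R(\mu)$ equal to the minimum cost of a bundle of type-$t_R$ projects of utility at least $\mu$; and one then enumerates all profiles $(\mu_R)_{R\subseteq\calF}$, retaining those whose aggregated per-group and global costs respect all budgets, and maximizing $\sum_R\mu_R$. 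The sole reason this is not $\FPT$ is that each $\mu_R$ ranges over all of $\{0,\dots,u\}$, giving $(u+1)^{2^{g}}$ profiles.

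First, fixing $\epsilon>0$, I would restrict the admissible targets to the set $T=\{0,1,\dots,\lceil 1/\epsilon\rceil\}\cup\{\lfloor(1+\epsilon)^{j}\rfloor:0\le j\le\lceil\log_{1+\epsilon}u\rceil\}$, which keeps the small utilities exactly and spaces the large ones geometrically, so that $|T|=\Oh(\tfrac{\log u}{\epsilon})$. The preprocessing of $\cost_R(\cdot)$ is unchanged; the enumeration now ranges only over profiles $(\mu_R)_R\in T^{2^{g}}$, and feasibility is checked exactly as before.

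Next I would prove the guarantee. Fix an optimal solution $X^{*}$ of value $\mathrm{OPT}$ and let $u^{*}(R)$ be the utility it draws from type-$t_R$ projects. For each $R$ take $\mu_R$ to be the largest element of $T$ with $\mu_R\le u^{*}(R)$; by the construction of $T$ this yields $\mu_R\ge u^{*}(R)/(1+\Oh(\epsilon))$ (indeed $\mu_R=u^{*}(R)$ once $u^{*}(R)\le 1/\epsilon$, so integrality costs nothing extra). Because the type-$t_R$ part of $X^{*}$ is itself a bundle of utility $u^{*}(R)\ge\mu_R$, its cost is at least $\cost_R(\mu_R)$; aggregating over types exactly as in the proof of Theorem~\ref{thm:xp-g} shows the profile $(\mu_R)_R$ respects every group budget and the global budget, so it is feasible and examined by the algorithm. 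Its utility is $\sum_R\mu_R\ge\mathrm{OPT}/(1+\Oh(\epsilon))$, and rescaling $\epsilon$ by a constant turns this into the desired $(1+\epsilon)$-approximation.

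Finally comes the running-time analysis, which I expect to be the crux. The algorithm runs in $|T|^{2^{g}}\cdot\mathrm{poly}(|x|)=\Oh\bigl((\tfrac{\log u}{\epsilon})^{2^{g}}\bigr)\cdot\mathrm{poly}(|x|)$, and the delicate point is that this is $\FPT$ in $g$ for fixed $\epsilon$ even though the base $\log u$ depends on the input. I would settle this with the standard win/win argument applied to $(\log u)^{2^{g}}=2^{\,2^{g}\log\log u}$: if $\log\log u\le 2^{g}$ then $(\log u)^{2^{g}}\le 2^{4^{g}}$, a function of $g$ alone; otherwise $2^{g}<\log\log u$ gives $(\log u)^{2^{g}}<2^{(\log\log u)^{2}}\le 2^{\log u}=u$ (using $(\log\log u)^{2}\le\log u$ for large $u$), which is polynomial in $|x|$. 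Either way the factor is bounded by $f(g)\cdot\mathrm{poly}(|x|)$, and absorbing the fixed $(1/\epsilon)^{2^{g}}$ term into $f$ shows the whole scheme is $\FPT$ in $g$, completing the argument.
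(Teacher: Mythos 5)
Your proposal is correct and follows essentially the same route as the paper: both discretize the per-type utility targets to a geometric $(1+\epsilon)$-grid of size $\Oh(\log u/\epsilon)$, brute-force over the $2^{g}$ project types using precomputed min-cost bundles, and bound the resulting $2^{2^{g}\log\log u}$ factor by the same win/win argument (the paper uses $xy\le(x^2+y^2)/2$ where you use explicit case analysis). The only differences are presentational: the paper packages the grid as an auxiliary instance of ``super-projects'' $\proj(R,v)$ with bucketed approval scores and a preliminary $2^{2^{g}}$ guess of which types are used, whereas you modify the enumeration grid of the $\XP$ algorithm directly, which is slightly leaner but mathematically the same.
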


\begin{proof}
 The idea of the algorithm is as follows.
 First, we reduce the given instance of \gpb to an instance of \gpb with an additional feasibility restriction, in particular,
 such that a feasible solution has to contain exactly one project from each project type,
 where a type of a project is uniquely defined by the family of groups to which the project belongs.
 The reduction, shown below, takes $\FPT$ time wrt.\ $g$.
 In the second step we will round down the approval score of each project to the closest multiplicity of $(1+\epsilon)$, in effect bounding the number of different approval scores of a project to the logarithmic function of the input size.
 Then we will apply a brute-force enumeration that runs in $\FPT$ time wrt.\ $g$.

 More formally, let us fix $\epsilon>0$ and an instance $\calI = (\calV,P,\calE,\calF,c,B,b)$ of \mgpb.
 Recall that w.l.o.g. we assumed that each project is approved by at least one voter.
 Let $a\colon P \rightarrow \{1,2,\dots,|\calV|\}$ be an approval score function, i.e.,
 $a(p) = |\{ v \in \calV: p \in P_v \}|$.,
 Notice that the approval score function $a(\cdot)$ can be encoded in unary
 (instead of having voters explicitly).
 Let $A$ be the total approval score of all the projects, e.g., $A = \sum_{p \in P} a(p)$.
 Let $u^*(\calI)$ be the value (total utility) of an optimal solution to $\calI$.
 To avoid triviality, w.l.o.g. we assume $u^*(\calI)>0$.

 Now, given a subfamily $R \subseteq \calF$, we say that a project $p$ is of type $R$ if it belongs to all the groups in $R$ and to none of the groups in $\calF \setminus R$ (so every project has an unique type).
 We have at most $2^g$ types of projects.

 First, we fix an optimal solution $X^*$ and we do the following preprocessing on the instance $\calI$.
 For every project type, we guess whether at least one project of the type is contained in $X^*$, and if none, we delete all projects of that type.
 We can do this preprocessing in $\Ohstar(2^{2^g})$ time.
 Note that, after this step, the number of project types cannot increase because the number of projects cannot increase.
 (As this is just a preprocessing, in our next steps we override the notation and use $P$ for the set of projects after the preprocessing.)
 We define the number of project types after the preprocessing as $t$, with $t \leq 2^g$.

 For every project type $R \subseteq \calF$ we run a dynamic programming procedure that outputs the following:
 For every value $v \in \{1,2,\dots,A\}$, we compute $\cost(R,v)$, which is the minimum cost of a subset of projects of type $R$ whose total value is exactly $v$ (it is equal to $\infty$ if there is no such subset).
 Also we store a bundle of projects, $\bundle(R,v)$, that realizes the minimum cost $\cost(R,v)$.
 We can compute $\cost(R,v)$ together with $\bundle(R,v)$ in time upper-bounded by $\Ohstar(t \cdot A \cdot |P|) = \Ohstar(2^g)$.

 Now, we create a new instance $\calI' = (\calV',P',\calE',\calF',c',B',b')$ of \gpb as follows.
 For every project type $R \subseteq \calF$ and every value $v \in \{1,2,\dots,A\}$ such that $\cost(R,v)$ is not $\infty$, we define a project $\proj(R,v) \in P'$ of cost $c'(\proj(R,v)) = \cost(R,v)$ and approval score $a'(\proj(R,v))$ equals to $v$
 (equivalently we can define $A$ many voters in $\calV'$, where the $i$-th voter approves all the projects $\proj(R,v)$ such that $v \geq i$).
 We define the type of all the projects $\proj(R,v)$, $v \in \{1,2,\dots,A\}$, as $R$.
 Note that a project $\proj(R,v)$ corresponds to a bundle of projects of type $R$ from the original instance.

 We keep the same global budget limit, i.e., $B' = B$.
 For every group $F \in \calF$, we define a group $T_F \in \calF'$ that contains all projects $\proj(R,v)$ whose type $R$ contains the group $F$, i.e., $T_F = \{\proj(R,v): F \in R\}$.
 We define $b'(T_F) = b(F)$.

 We show correspondence of feasible solutions in both instances.
 Let $\calI'_1$ be the instance $\calI'$ of \gpb restricted to solutions containing exactly one project of each type.
 \begin{lemma}\label{lemma:fpt-as-trans-to-origin-simpler}
  Every feasible solution to $\calI'_1$ can be transformed into a feasible solution to $\calI$ with the same utility.
 \end{lemma}

 \begin{proof}
  Let $X'$ be a feasible solution to $\calI'_1$.
  We construct a solution $X$ to $\calI$ as follows.
  For every $\proj(R,v) \in X'$ we add to $X$ projects stored in $\bundle(R,v)$.

  First, we show that $X$ is feasible to $\calI$.
  For warm-up we show that $X$ keeps a global budget $B$.
  We have the following chain of (in)equalities:
  \begin{align*}
   \sum_{p \in X} c(p) = \sum_{\proj(R,v) \in X'} \sum_{p \in \bundle(R,v)} c(p)
   = \sum_{\proj(R,v) \in X'} \cost(R,v)
   = \sum_{\proj(R,v) \in X'} c'(\proj(R,v)) \leq B' = B,
  \end{align*}
  where the first equality follows from definition of $X$, the second equality follows from definition of $\cost(R,v)$,
  the third equality follows from definition of $c'$, and the inequality follows from feasibility of $X'$ to $\calI'_1$.

  Next, we show that $X$ keeps budgets $b(\calF)$ for every $F \subseteq \calF$.
  The proof of this fact is analogous to the proof of feasibility for a global budget $B$,
  but we limit the first sum to projects of type $R$ that contains group $F$.
  We have
  \begin{align*}
   \sum_{p \in F \cap X} c(p) = \sum_{R \subseteq \calF: F \in R}\sum_{\proj(R,v) \in X'} \sum_{p \in \bundle(R,v)} c(p)
   = &\sum_{R \subseteq \calF: F \in R}\sum_{\proj(R,v) \in X'} \cost(R,v)\\
   = &\sum_{R \subseteq \calF: F \in R}\sum_{\proj(R,v) \in X'} c'(\proj(R,v))
   \leq b'(T_F) = b(F),
  \end{align*}
  where the inequality follows from feasibility of $X'$ to $\calI'$.
  It finishes the proof of feasibility of $X$ to $\calI$.
  The last one step is to show that the utility achieved by $X$ is equal to the utility achieved by $X'$. Indeed we have
  \begin{align*}
   \sum_{p \in X} a(p)
   = \sum_{\proj(R,v) \in X'} \sum_{p \in \bundle(R,v)}\hspace{-10pt}a(p)
   = \sum_{\proj(R,v) \in X'} \hspace{-10pt} v
   = \sum_{p' \in X'} a'(p'),
  \end{align*}
  where the second equality follows from definition of $X$,
  the third equality follows from definition of $\bundle(R,v)$,
  the forth equality follows from definition of $a'$.
 \end{proof}

 \begin{lemma}\label{lemma:fpt-as-utility-correspond-simpler}
  We have $u^*(\calI) \leq u^*(\calI'_1)$.
 \end{lemma}
 
 \begin{proof}
  Recall that $X^*$ is a fixed optimal solution to $\calI$.
  We will construct a feasible solution $X'$ to $\calI'_1$.
  Intuitively, for every type $R \subseteq \calF$,
  we take projects of type $R$ from $X^*$,
  we count their total utility and
  we put to $X'$ a min-cost bundle of type $R$ that achieve the same utility.
  Formally, for every project type $R \subseteq \calF$,
  a solution $X'$ contains project $\proj(R,\sum_{p \in X^* \cap R} a(p))$,
  where $p \in R$ denotes that $p$ has type $R$.
  Obviously, $X'$ contains exactly one project of each type.
  Next we show that $X'$ keeps a global and all group budgets.
  Cost of solution $X'$ is equal to
  \begin{align*}
   \sum_{p' \in X'} c'(p')
   = \sum_{R \subseteq \calF} c'(\proj(R,\sum_{p \in X^* \cap R} a(p)))
   = \sum_{R \subseteq \calF} \cost(R,\sum_{p \in X^* \cap R} a(p))
   \leq \sum_{R \subseteq \calF} \sum_{p \in X^* \cap R} c(p)
   =  \sum_{p \in X^*} c(p)
   \leq B = B',
  \end{align*}
  where the first equality follows from definition of $X'$,
  the second equality follows from definition of $c'$,
  the first inequality follows from definition of $\cost(R,v)$ being a proper min-cost bundle,
  and the second inequality follows from feasibility of $X^*$ to $\calI$.

  Next, we show that $X'$ keeps group budgets $b'(T_F)$ for every $F \subseteq \calF$.
  The proof of this fact is analogous to the proof of feasibility for a global budget $B$.
  The cost of projects taken from $T_F$ is equal to
  \begin{align*}
   \sum_{p' \in T_F \cap X'} c'(p')
   = \sum_{R \subseteq \calF: F \in R} c'(\proj(R,\sum_{p \in X^* \cap R} a(p)))
   = &\sum_{R \subseteq \calF: F \in R} \hspace{-10pt} \cost(R,\sum_{p \in X^* \cap R} a(p))\\
   \leq &\sum_{R \subseteq \calF: F \in R} \sum_{p \in X^* \cap R} \hspace{-7pt} c(p)
   = \sum_{p \in F \cap X^*} \hspace{-5pt} c(p)
   \leq b(F) = b'(T_F).
  \end{align*}
  It finishes the proof of feasibility of $X'$ to $\calI'_1$.

  The last one step is to show that the utility achieved by $X'$ is equal to
  \begin{align*}
   \sum_{p' \in X'} a'(p')
   = \sum_{R \subseteq \calF} a'(\proj(R,\sum_{p \in X^* \cap R} a(p)))
   = \sum_{R \subseteq \calF} \sum_{p \in X^* \cap R} a(p)
   = \sum_{p \in X^*} a(p)
   = u^*(\calI),
  \end{align*}
  where the first equality follows from definition of $X'$,
  the second equality follows from definition of $a'$,
  and the last equality follows from optimality of $X^*$.
  As $X'$ is a feasible solution to $\calI'_1$ hence $u^*(
  \calI'_1) \geq u^*(\calI)$ as required.
 \end{proof}

 Note that from Lemma~\ref{lemma:fpt-as-trans-to-origin-simpler} and Lemma~\ref{lemma:fpt-as-utility-correspond-simpler} we could derive $u^*(\calI) = u^*(\calI'_1)$.
 Although the crucial ingredient of Lemma~\ref{lemma:fpt-as-trans-to-origin-simpler} is that we can transform a feasible solution to $\calI'_1$ into a feasible solution to $\calI$ (keeping the value of the solution) in polynomial-time.

 In the second step (called \emph{bucketing}) we round down the approval score of each project to the closest multiple of $(1+\epsilon)$.
 Let $\calI''_1$ be an instance after the bucketing procedure
 (note that we do not change costs and budget limits when bucketing the approval scores).

 \begin{lemma}\label{lemma:fpt-as-apx-lose-simpler}
  We have $u^*(\calI''_1) \geq \frac{u^*(\calI'_1)}{1+\epsilon}$.
 \end{lemma}
 
 \begin{proof}
  Let $a''$ be an approval score function in instance $\calI''_1$ and let $X'$ be an optimal solution to $\calI'_1$.
  A set $X'$ is a feasible solution to $\calI''_1$ as well.
  Therefore we have
  \begin{align*}
   u^*(\calI''_1)
   \geq \sum_{p' \in X'} a''(p')
   \geq \sum_{p' \in X'}\frac{a'(p')}{1+\epsilon}
   = \frac{u^*(\calI'_1)}{1+\epsilon},
  \end{align*}
   where the first inequality follows from feasibility of $X'$ to instance $\calI''_1$,
   the second inequality follows from definition of bucketing,
   and the equality follows from optimality of $X'$ to instance $\calI'_1$.
 \end{proof}

 Because of bucketing, it is possible that two projects of the same type but with different approval scores are rounded down to the same value.
 Hence we keep the project of minimum cost.
 So, overall, after bucketing we have at most $\log_{1+\epsilon}(A)$ projects of each type. For each project type, we branch on which project of that type is selected and we store the best solution $X''$ (it has utility $u^*(\calI''_1)$).
 This takes $\Ohstar((\log_{1+\epsilon}(A))^{t})$ time, which can be bounded as follows.

 \begin{lemma}\label{lem:fpt-as-time-simpler}
  For a fixed $\epsilon>0$ we have
  $$\Ohstar((\log_{1+\epsilon}(A))^{t}) \leq \Ohstar(2^{\frac{1}{2} \cdot 4^g}).$$
 \end{lemma}

 \begin{proof}
  The left-hand side is equal to
  $\Ohstar(2^{2^g \cdot \log\log_{1+\epsilon}(A)})$
  and, using $xy \leq (x^2+y^2)/2$, it is not bigger than
  $\Ohstar(2^{\frac{1}{2} \cdot 4^g + \frac{1}{2} \cdot \log^2\log_{1+\epsilon}(A)}) \leq \Ohstar(2^{\frac{1}{2} \cdot 4^g + \delta(\epsilon) \cdot \log(A)})$,
  where $\delta(\epsilon)$ is a constant dependent on fixed $\epsilon$.
  Hence, the expression is bounded by
  $\Ohstar(2^{\frac{1}{2} \cdot 4^g}) \cdot \Ohstar(A^{\delta(\epsilon)}) = \Ohstar(2^{\frac{1}{2} \cdot 4^g})$
  because $A \leq |\calV||P|$.
 \end{proof}

 It shows the algorithm runs in $\FPT$ time wrt. $g$.
 Note that $\epsilon$ (which is a fixed constant) is present only in the polynomial on the input size term.

 The solution $X''$ is feasible to $\calI'_1$, hence using Lemma~\ref{lemma:fpt-as-trans-to-origin-simpler} we can construct a feasible solution to $\calI$ with utility equal to
 \begin{align*}
   u^*(\calI''_1)
   \geq \frac{u^*(\calI'_1)}{1+\epsilon} \geq \frac{u^*(\calI)}{1+\epsilon},
 \end{align*}
  where the first inequality follows from Lemma~\ref{lemma:fpt-as-apx-lose-simpler} and the second inequality follows from Lemma~\ref{lemma:fpt-as-utility-correspond-simpler}.
  This finishes the proof.
\end{proof}

\subsection{W-hardness for g with Utility Requirements}\label{sec:w-hardness-utility}

While we do not know whether \gpb is $\FPT$ wrt.\ $g$, here we partially resolve this question by providing a proof of $\W[1]$-hardness wrt. $g$ for a slightly more general problem that we call \ugpb.
\ugpb has the same input and output as \gpb with only one difference: additionally we define {\it the utility requirement function} $u \colon \calF\rightarrow \mathbb{N}$ which means how much utility we have to get from projects belonging to each project-group.
Let $a \colon P \rightarrow \{1,\dots,|\calV|\}$ be an approval score function, i.e., $a(p) = |\{ v \in \calV: p \in P_v \}|$.
Formally \ugpb is defined as follows.

\defproblem{\ugpb}
{A set $P$ of projects with their cost function $c \colon P \rightarrow \mathbb{N}$,
a set $\calV$ of voters with their approval ballots $\calE=\{P_v\subseteq P \colon v\in \calV\}$,
a family of groups of projects $\calF \subseteq 2^P$ with their budget function $b \colon \calF\rightarrow \mathbb{N}$
and utility requirement function $u \colon \calF\rightarrow \mathbb{N}$,
a global budget limit $B$,
and a desired utility value $u$.}
{Is there a set of projects $X \subseteq P$ such that $\sum_{v \in \calV}|P_v \cap X| \geq u$, $\sum_{p\in X}c(p)\leq B$, and for every set $F \in \calF$, $\sum_{p\in F\cap X}c(p)\leq b(F)$ and $\sum_{p\in F\cap X} a(p)\geq u(F)$?}

Note that we override the notation for $u$, i.e.,
$u(\cdot)$ is a function, $u$ is a desired utility value and hence we have $u(P) = u$.

We will construct an FPT-reduction from the \arcsupply problem that is $\W[1]$-hard wrt. the number of vertices in the input graph (even on planar graphs)~\cite{BodlaenderLP09}.
In \arcsupply we are given a simple directed graph (no self-loops, no parallel arcs).
For each vertex we are given a positive integer ({\it demand}) and for every directed edge $e$ we have a list $L_e$ of {\it supply pairs}, i.e.,
$L_e = \{ (x_1^e,y_1^e), \dots, (x_{|L_e|}^e,y_{|L_e|}^e) \}$, where $x_i^e, y_i^e \in \mathbb{N}_{>0}$.
The task is to choose exactly one supply pair from each of the lists $L_e$ in such a way that each vertex gets at least as much supply as its demand.
A supply pair $(x_i^e,y_i^e)$ for $e=(\alpha,\beta)$ gives supply $x_i^e$ to vertex $\alpha$ and $y_i^e$ to vertex $\beta$.
Formally the problem is defined as follows.

\defproblem{\arcsupply}
{A simple directed graph $G=(V,E)$,
a demand function $d \colon V \rightarrow \mathbb{N}_{>0}$,
a list of supply pairs $L_e = \{ (x_1^e,y_1^e), \dots, (x_{|L_e|}^e,y_{|L_e|}^e) \}$ for every $e \in E$,where $x_i^e, y_i^e \in \mathbb{N}_{>0}$ are encoded in unary.}
{Is there a function $S$ that for each $e \in E$ we have $S(e) \in \{ 1, \dots, |L_e| \}$ and
for every $\alpha \in V$ we have $d(\alpha) \leq \sum_{e=(\alpha,\beta) \in E} x_{S(e)}^e + \sum_{e=(\beta,\alpha) \in E} y_{S(e)}^e$.}

Without loss of generality we can assume that for every $e \in E$ we have $x_1^e > x_2^e > \cdots > x_{|L_e|}^e$ and $y_1^e < y_2^e < \cdots < y_{|L_e|}^e$.
We can satisfy this after a preprocessing step by removing {\em dominated} pairs and sorting the remaining ones, where
a pair $(x_i,y_i)$ dominates $(x_j,y_j)$ if $x_i \geq x_j$ and $y_i \geq y_j$.
The preprocessing step takes polynomial-time: $\Oh(|E|\cdot \max_{e \in E}|L_e|^2)$.

\begin{theorem}\label{thm:w-h-ugpb}
 \ugpb is $\W[1]$-hard wrt.\ $g$.
\end{theorem}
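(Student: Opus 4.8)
The plan is to give a parameterized reduction from \arcsupply, which is $\W[1]$-hard with respect to the number $|V|$ of vertices even on planar graphs. Since on planar graphs $|E| = \Oh(|V|)$, it suffices to build a \ugpb instance whose number of groups is $\Oh(|V| + |E|) = \Oh(|V|)$; then $\W[1]$-hardness of \ugpb wrt.\ $g$ follows. Concretely, I would introduce one group $F_\gamma$ per vertex $\gamma \in V$ and a constant number of auxiliary groups per edge, so that $g = \Oh(|V|)$ and the reduction is indeed a parameterized reduction.

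The role of the vertex groups is to encode the demands as utility requirements: the demand $d(\gamma)$ becomes the utility requirement $u(F_\gamma)$, while $b(F_\gamma)$ is left non-binding. To encode the per-edge choice of a supply pair, for each edge $e=(\alpha,\beta)$ and each index $i \in \{1,\dots,|L_e|\}$ I would create two coupled projects (tokens): a source token $s_{e,i}$ with approval score $x_i^e$, placed in $F_\alpha$, and a target token $t_{e,i}$ with approval score $y_i^e$, placed in $F_\beta$. Thus, if exactly one matched pair $\{s_{e,i},t_{e,i}\}$ is selected for every edge, the utility accumulated in each $F_\gamma$ equals precisely the total supply received by $\gamma$, and the requirements $u(F_\gamma) = d(\gamma)$ are met exactly when all demands are satisfied. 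Both the global budget and the global utility requirement can then be set to trivially-satisfiable values (the latter to $\sum_\gamma d(\gamma)$, which is implied by the per-vertex requirements).

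The crux is forcing that, per edge, exactly one source token and one target token are taken and that their indices match (otherwise a solver could combine a small index on the source side, giving large $x$, with a large index on the target side, giving large $y$, beating every genuine supply pair). Here I would exploit the monotone preprocessing $x_1^e > \cdots > x_{|L_e|}^e$ and $y_1^e < \cdots < y_{|L_e|}^e$ together with the cost dimension. Writing $L=|L_e|$ and fixing a common offset $H > L$, set $c(s_{e,i}) = H + (L-i)$ and $c(t_{e,i}) = H + i$, and add, per edge, an auxiliary group over the source tokens with budget $H + (L-1)$ and utility requirement $1$, an auxiliary group over the target tokens with budget $H + L$ and utility requirement $1$, and a coupling group over all $2L$ tokens of $e$ with budget $2H + L$. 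The $H$-offset forces at most one source and at most one target token (two would exceed the budget), the requirement $1$ forces at least one of each (all approval scores are positive), and the coupling budget $2H + L$ forces the chosen indices to satisfy $j \le i$. This is three groups per edge, keeping $g = \Oh(|V|)$, and all numbers stay polynomial since $H = \Oh(\max_e |L_e|)$ and the supplies are given in unary.

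The equivalence then follows in both directions. Forward: a valid supply assignment $S$ yields the matched selection $i=j=S(e)$, which meets every constraint (the coupling budget is tight, and each vertex group reaches at least its demand). Backward: any feasible \ugpb solution has, per edge, one source token $s_{e,i}$ and one target token $t_{e,j}$ with $j \le i$; defining $S(e) = j$ gives each source $x_j^e \ge x_i^e$ and each target $y_j^e$, so monotonicity (domination) guarantees that every vertex receives at least the supply it had in the \ugpb solution, hence at least its demand. I expect the main obstacle to be exactly this coupling step --- reconciling the source/target asymmetry (a single project contributes the same approval score to all of its groups) with the need to pick one matched pair per edge using only upper-bound budgets and lower-bound utility requirements; it is the monotone, unary structure of \arcsupply that makes the $H$-offset gadget go through without blowing up the numbers.
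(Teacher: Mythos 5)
Your reduction is correct, starts from the same source problem (\arcsupply, $\W[1]$-hard wrt.\ the number of vertices) and even produces the same group count $g=|V|+3|E|$, but the encoding is essentially the \emph{dual} of the paper's. The paper encodes the demands as budget upper bounds by complementing costs --- project $p(e,\alpha,i)$ costs $N-x_i^e$ and $b(F_\alpha)=N\cdot\deg(\alpha)-d(\alpha)$, so spending little means supplying much --- and it enforces the index coupling $j\le i$ through a \emph{utility requirement} $u(F_e)=|L_e|+1$ with complementary approval scores $i$ and $|L_e|-i+1$. You do the opposite: demands become utility requirements $u(F_\gamma)=d(\gamma)$ with approval scores equal to the raw supplies, and the coupling $j\le i$ is enforced through the budget $2H+L$ of your coupling group with complementary costs $H+(L-i)$ and $H+j$. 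Likewise, ``exactly one token per side'' is split in your construction between a budget (at most one, via the $H$-offset) and a utility requirement of $1$ (at least one), whereas the paper uses the per-side budget $N$ for ``at most one'' and the single requirement $u(F_e)$ for ``at least one of each''. After that, both arguments exploit the monotone, dominance-free preprocessing of the lists identically: a feasible solution may mix indices $i$ on the source side and $j\le i$ on the target side, and one rounds to a genuine supply pair ($j$ in your version, $i$ in the paper's --- a symmetric choice) without losing supply at either endpoint. Your variant arguably uses the distinguishing feature of \ugpb (utility lower bounds) more directly for the demands, while the paper reserves it for the coupling; neither buys extra generality, and your appeal to planarity for $g=\Oh(|V|)$ is unnecessary (the paper just uses $|E|\le|V|^2$, which suffices for an FPT-reduction).

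Two details you should still spell out, neither of which threatens correctness. First, approval scores are not free parameters in \ugpb: to realize score $x_i^e$ you must construct actual voters, e.g.\ nested ballots in which voter $v_k$ approves every project of intended score at least $k$; this stays polynomial precisely because the supplies are encoded in unary, and it is exactly what the paper does. Second, the offset should be fixed as $H>\max_{e\in E}|L_e|$ (a single per-instance constant rather than a per-edge $L$), and degenerate coincidences of groups (e.g.\ a degree-one vertex $\alpha$ with $|L_e|=1$ makes $F_\alpha$ equal to your source group) should be merged --- taking the smaller budget and the larger utility requirement --- to respect the paper's convention that no two groups of $\calF$ are identical.
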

\begin{proof}
 We construct a reduction from \arcsupply which is $\W[1]$-hard parameterized by the number of vertices in the input graph~\cite{BodlaenderLP09}.
 Let $(G=(V,E), d, (L_e)_{e \in E})$ be an instance of \arcsupply with $x_1^e > x_2^e > \cdots > x_{|L_e|}^e$ and $y_1^e < y_2^e < \cdots < y_{|L_e|}^e$ for every edge $e \in E$.
 We define a large number $N = 2 \cdot \sum_{e \in E}\sum_{i=1}^{|L_e|} (x_i^e+y_i^e)$, and let $L$ be the input size of all lists $(L_e)_{e \in E}$.

 For every edge $e=(\alpha,\beta) \in E$ and every supply pair $(x_i^e,y_i^e) \in L_e$ we create two projects in $P$:
 project $p(e,\alpha,i)$ of cost $N-x_i^e$ and approval score $i$, and
 project $p(e,\beta,i)$  of cost $N-y_i^e$ and approval score $|L_e|-i+1$.

 In order to maintain the desired approval scores, we define voters $v_1, v_2, \dots, v_{\max_{e \in E}|L_e|}$, where a voter $v_i$ approves all projects with approval score equal-or-greater than $i$.
 Then indeed, a project $p$ is approved by exactly $v_1, v_2, \dots, v_{a(p)}$.

 Next we define groups of projects belonging to $\calF$ and their budget limits.
 For every vertex $\alpha \in V$ we create a group of projects $F_\alpha = \{ p(e,\alpha,i) \in P \}$ with its budget limit $b(F_\alpha) = N \cdot \deg(\alpha) - d(\alpha)$,
 where $\deg(\alpha)$ is a degree of $\alpha$ in $G$ ($\deg(\cdot)$ counts both in-edges and out-edges).
 Budget limit $b(F_\alpha)$ corresponds to supply demand for vertex $\alpha$.
 For every edge $e=(\alpha,\beta) \in E$ we create three groups of projects:
 $F_{e,\alpha} = \{ p(e,\alpha,i) \in P \}$,
 $F_{e,\beta } = \{ p(e,\beta,i ) \in P \}$ and
 $F_e = F_{e,\alpha} \cup F_{e,\beta}$.
 We define budget limits for them as follows:
 $b(F_{e,\alpha}) = b(F_{e,\beta}) = N$ and
 $b(F_e) = 2N$.
 Budget limit $b(F_{e,\alpha})$ corresponds to choosing at most one $x_i^e$ from $L_e$ and $b(F_{e,\beta})$ corresponds to choosing at most one $y_i^e$ from $L_e$.
 We define the global budget as $B=\sum_{p \in P} c(p)$.

 The last ingredient is to specify utility requirements for groups of projects.
 We only restrict utility requirement on $F_e$ for every $e \in E$ by setting $u(F_e) = |L_e|+1$.
 Utility requirement $u(F_e)$ together with dominating pairs property of $(x_i^e,y_i^e)$ give us possibility that even if in a solution we have $x_i^e$ and $y_j^e$ with $i \neq j$ we can replace them with $x_i^e$ and $y_i^e$ (which corresponds to a supply pair in the original instance).
 For the remaining groups utility requirement is $0$, also $u=u(P)=0$.

 The reduction takes time at most $\Oh(|E| \cdot L^2)$ for creating the projects and voters and $\Oh(|V||E|^2 \cdot L)$ to create groups together with budget limits and utility requirements.
 Hence, in total, the running time is bounded by $\Oh(|V||E|^2 \cdot L^2)$ that is polynomial in the input size.
 We have $|V|$ many groups of type $F_\alpha$, $2|E|$ many groups of type $F_{e,\alpha}$ and $|E|$ many groups of type $F_e$.
 Hence we get $g = |\calF| = |V|+3|E| \leq 4|V|^2 = \Oh(|V|^2)$.
 $g$ is bounded by a function of the parameter $|V|$ for which \arcsupply is $\W[1]$-hard~\cite{BodlaenderLP09}, hence this reduction is an FPT-reduction.
 Next we show correctness of the reduction.

 {\bf Correctness:}
 Let $S$ be a solution to an \arcsupply instance.
 We construct a solution to \ugpb by simply choosing the corresponding projects, i.e., for every edge $e = (\alpha,\beta) \in E$, we take to the solution $X$ projects $p(e,\alpha,S(e))$ and $p(e,\beta,S(e))$.

 Recall that desired total utility value is $0$ and global budget $B$ is large enough to allow any subset of projects be a solution.
 Therefore, we need to check only budget and utility requirements in the groups from $\calF$.

 For every $\alpha \in V$ we check budget limit for $F_\alpha$.
 Cost of the projects taken from $F_\alpha$ is equal to
 \begin{align*}
  \sum_{p \in F_\alpha \cap X} c(p)
  &= \sum_{e=(\alpha,\beta) \in E} \sum_{i: p(e,\alpha,i) \in X} (N-x_i^e)
   +\sum_{e=(\beta,\alpha) \in E} \sum_{i: p(e,\alpha,i) \in X}  (N-y_i^e)\\
  &= \sum_{e=(\alpha,\beta) \in E} (N-x_{S(e)}^e)
    +\sum_{e=(\beta,\alpha) \in E} (N-y_{S(e)}^e)
  = \deg(\alpha) \cdot N
    -\sum_{e=(\alpha,\beta) \in E} x_{S(e)}^e
    -\sum_{e=(\beta,\alpha) \in E} y_{S(e)}^e,
 \end{align*}
 and this is upper-bounded by $\deg(\alpha) \cdot N -d(\alpha) = b(F_\alpha)$ because $S$ is feasible.
 Therefore budget limit for $F_\alpha$ is kept.

 Utility requirement for $F_\alpha$ is $0$ hence always feasible.

 Next, for every edge $e=(\alpha,\beta) \in E$ we consider groups $F_{e,\alpha}, F_{e,\beta}$ and $F_e$.
 Cost of projects taken from $F_{e,\alpha}$ is equal to
 \begin{align*}
  \sum_{p \in F_{e,\alpha} \cap X} c(p) = \sum_{i: p(e,\alpha,i) \in X} (N-x_i^e) = N-x_{S(e)}^e
  \leq N = b(F_{e,\alpha}).
 \end{align*}
 Similarly we can show that the cost of projects taken from $F_{e,\beta}$ is at most $N = b(F_{e,\beta})$.
 It follows from the above and $F_e = F_{e,\alpha} \cup F_{e,\beta}$ that cost of projects taken from $F_e$ is at most $2N = b(F_e)$.

 For $F_{e,\alpha}$ and $F_{e,\beta}$ the utility requirement is $0$, hence always feasible.
 The utility requirement for $F_e$ is kept because
 \begin{align*}
  \sum_{p \in F_e \cap X} a(p) &= \sum_{p \in F_{e,\alpha} \cap X} a(p) + \sum_{p \in F_{e,\beta} \cap X} a(p)
  = \sum_{i: p(e,\alpha,i) \in X} i + \sum_{i: p(e,\beta,i) \in X} (|L_e|-i+1)
  = S(e) + |L_e|-S(e)+1 = u(F_e).
 \end{align*}

{\bf Soundness:}
Let $X$ be a solution to the \ugpb instance.
First we show that for every edge $e=(\alpha,\beta) \in E$
there are exactly one project from $F_{e,\alpha}$ and exactly one project from $F_{e,\beta}$ in $X$.

\begin{lemma}\label{lem:fea1}
 For every edge $e=(\alpha,\beta) \in E$ we have
 $$|X \cap F_{e,\alpha}|=|X \cap F_{e,\beta}|=1.$$
\end{lemma}
\begin{proof}
 Let $e=(\alpha,\beta)$ be any edge from $E$.
 We have $|X \cap F_{e,\alpha}| \leq 1$ because otherwise, if $|X \cap F_{e,\alpha}| > 1$ then
 $$ \sum_{p \in F_{e,\alpha} \cap X} c(p) \geq (N-x_1^e)+(N-x_2^e) > N = b(F_{e,\alpha})$$
 that is a contradiction with feasibility of $X$.
 Analogously we show that $|X \cap F_{e,\beta}| \leq 1$.

 If $|X \cap F_{e,\alpha}| = 0$ then utility we get in $F_e$ is equal to
 $$ \sum_{p \in F_e \cap X} a(p) = \sum_{p \in F_{e,\beta} \cap X} a(p)$$
 that is at most
 $$ |F_{e,\beta} \cap X| \cdot |L_e| \leq |L_e| < u(F_e)$$
 and this would be a contradiction with feasibility of $X$.
 Therefore $|X \cap F_{e,\alpha}| > 0$ and analogously we show that $|X \cap F_{e,\beta}| > 0$
 what finishes the proof.
\end{proof}

Due to Lemma~\ref{lem:fea1}, for every edge $e=(\alpha,\beta) \in E$
we define $i(e)$ being such that $\{ p(e,\alpha,i(e)) \} = X \cap F_{e,\alpha}$
and $j(e)$ being such that $\{ p(e,\beta,j(e)) \} = X \cap F_{e,\beta}$.

In every feasible solution to \ugpb there is a dependence between $i(e)$ and $j(e)$.

\begin{lemma}\label{lem:yij-monoton}
 For every edge $e\in E$ we have $y_{j(e)}^e \leq y_{i(e)}^e$.
\end{lemma}
\begin{proof}
 Let $e=(\alpha,\beta)$ be an edge from $E$.
 The utility obtained in group $F_e$ is equal to
\begin{align*}
   \sum_{p \in F_e \cap X} a(p) 
 = \sum_{p \in F_{e,\alpha} \cap X} a(p) + \sum_{p \in F_{e,\beta} \cap X} a(p)
 = a(p(e,\alpha,i(e))) + a(p(e,\beta,j(e)))
 = i(e) + (|L_e|-j(e)+1).
\end{align*}
As $X$ is a feasible solution to \ugpb we have $|L_e|+1 = u(F_e) \leq i(e) + (|L_e|-j(e)+1)$, hence $j(e) \leq i(e)$.
Together with $y_1^e < y_2^e < \cdots < y_{|L_e|}^e$ we get the required inequality.
\end{proof}

We construct a solution $S_X$ to \arcsupply setting $S_X(e) := i(e)$ for every edge $e \in E$.

We will show that $S_X$ is a feasible solution by arguing that every $\alpha \in V$ is supplied by at least its demand.
Let us fix $\alpha \in V$.
We know that the budget limit of $F_\alpha$ is kept, so we have
\begin{align*}
   N \cdot \deg(\alpha) - d(\alpha)
  = b(F_\alpha)
  \geq \sum_{p \in F_\alpha \cap X} c(p)
 &=    \sum_{e \in E} \sum_{p(e,\alpha,i) \in X} c(p(e,\alpha,i))\\
 &=    \sum_{e=(\alpha,\beta) \in E} \sum_{p(e,\alpha,i) \in F_{e,\alpha} \cap X} (N-x_i^e)
 +    \sum_{f=(\beta,\alpha) \in E} \sum_{p(f,\alpha,j) \in F_{f,\alpha} \cap X} (N-y_j^f)\\
 &=    \sum_{e=(\alpha,\beta) \in E} (N-x_{i(e)}^e)
  +    \sum_{f=(\beta,\alpha) \in E} (N-y_{j(f)}^f)\\
 &=    N \cdot \deg(\alpha)
  -    \sum_{e=(\alpha,\beta) \in E} x_{i(e)}^e
  -    \sum_{f=(\beta,\alpha) \in E} y_{j(f)}^f\\
 &\geq N \cdot \deg(\alpha)
  -    \sum_{e=(\alpha,\beta) \in E} x_{i(e)}^e
  -    \sum_{f=(\beta,\alpha) \in E} y_{i(f)}^f,
\end{align*}
where the last one inequality follows from Lemma~\ref{lem:yij-monoton}.
From this and definition of $S_X$ we get
$$ d(\alpha)
   \leq  \sum_{e=(\alpha,\beta) \in E} x_{S_X(e)}^e
        +\sum_{f=(\beta,\alpha) \in E} y_{S_X(f)}^f.$$
so the demand requirement for $\alpha$ is kept in a solution $S_X$.
\end{proof}

\section{Approximation Algorithms}\label{sec:apx}

Recall that an algorithm for \mgpb has an approximation factor $\alpha$, $\alpha \geq 1$,
if it always outputs a solution with at least an $\alpha$ fraction of the utility achieved by an optimal solution.

In the Related Work subsection we showed that an instance of \mgpb is an instance of \xdk{(g+1)}
where each project $p \in P$ has only two possible costs $\{0,c(p)\}$ over $g+1$ dimensions.
There is a polynomial-time approximation scheme (PTAS) for \ddk when $d$ is a constant,
running in $\Oh(m^{\lceil d/\epsilon\rceil-d})$-time~\cite[Theorem 9.4.3]{knapsackbook};
hence we have a PTAS for \gpb with $g=\Oh(1)$, running in $\Ohstar(m^{\Oh(1/\epsilon)})$-time.
We can improve this result using our $\XP$ wrt.\ $g$ algorithm from Theorem~\ref{thm:xp-g},
as for $g=\Oh(1)$, the algorithm provides an exact solution in polynomial-time.
\begin{corollary}\label{cor:g-constant-poly}
 For $g=\Oh(1)$, \gpb can be solved in polynomial-time.
\end{corollary}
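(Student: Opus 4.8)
The plan is to invoke the $\XP$ algorithm from Theorem~\ref{thm:xp-g} and observe that, once $g$ is fixed to a constant, its running time collapses to a polynomial in the input size. Recall that the algorithm of Theorem~\ref{thm:xp-g} solves \gpb in time $\Ohstar((u+1)^{2^g})$. The entire argument hinges on the fact that for $g=\Oh(1)$ the exponent $2^g$ is itself a constant, so the running time becomes $(u+1)^{\Oh(1)}$ up to polynomial factors.

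First I would justify that we may assume the desired utility $u$ is polynomially bounded in the input size. Any bundle $X \subseteq P$ satisfies $\sum_{v \in \calV}|P_v \cap X| \leq \sum_{v \in \calV}|P_v| \leq |\calV| \cdot |P| = n \cdot m$; hence if $u > n \cdot m$ the instance is trivially a no-instance and can be rejected immediately. Otherwise $u \leq n \cdot m$, so $u+1$ is polynomially bounded in the input size.

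Combining the two observations, for $g = \Oh(1)$ the running time $\Ohstar((u+1)^{2^g})$ is at most $\Ohstar((nm+1)^{\Oh(1)})$, which is polynomial in $n$ and $m$, and therefore polynomial in the input size. This yields the claimed polynomial-time algorithm and completes the proof.

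There is essentially no technical obstacle here, as the statement is a direct consequence of the $\XP$ result; the only point requiring care is the (routine) verification that $u$ is polynomially bounded, which guarantees that raising it to a constant power keeps the running time polynomial rather than merely slicewise polynomial.
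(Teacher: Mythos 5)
Your proposal is correct and matches the paper's own justification: the corollary is obtained by running the $\Ohstar((u+1)^{2^g})$ algorithm of Theorem~\ref{thm:xp-g}, whose running time is polynomial once $g$ is a constant. Your extra remark that one may assume $u \leq n\cdot m$ (else reject) is a routine but welcome detail that the paper leaves implicit.
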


In contrast, there is no FPTAS for \xdk{2} unless $\P=\NP$~\cite[Theorem 9.4.3]{knapsackbook}.
In that proof the authors use an $\NP$-hardness reduction from the {\sc Partition} problem to \xdk{2} with two different costs of a project $p$ in the dimensions, i.e.,
$c(p)$ and $(\max_{r \in P} c(r))-c(p)$.
Furthermore, unless $\FPT=\W[1]$, there is no efficient polynomial-time approximation scheme (EPTAS) for \xdk{2}~\cite{KulikS10}.

Next we focus on non-constant values of $g$.
\ddk can be approximated in polynomial time up to $d+1$ factor~\cite[Lemma 9.4.2]{knapsackbook}, thus we have the following.
\begin{corollary}\label{cor:g-plus-2-apx}
 There exists a polynomial-time $(g+2)$-approximation algorithm for \mgpb.
\end{corollary}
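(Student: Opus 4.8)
The plan is to invoke directly the reduction from \mgpb to \xdk{(g+1)} that was set up in the Related Work subsection, and then apply the known polynomial-time $(d+1)$-approximation for \ddk. First I would recall that, fixing an order $(F_1,\dots,F_g)$ on $\calF$, every project $p \in P$ becomes an item with a $(g+1)$-dimensional size vector $\gamma_p$ (with $\gamma_p(i)=c(p)$ if $p\in F_i$ and $\gamma_p(i)=0$ otherwise for $i\in[g]$, and $\gamma_p(g+1)=c(p)$), its utility equal to its approval score, and a capacity vector $\beta$ given by the group budgets $b(F_i)$ together with the global budget $B$ in the last coordinate. Thus an \mgpb instance is exactly an \xdk{(g+1)} instance, i.e.\ $d = g+1$, and this translation is computable in polynomial time.

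Next I would run the polynomial-time $(d+1)$-approximation algorithm of \cite[Lemma 9.4.2]{knapsackbook} on this instance with $d=g+1$. Since any feasible solution to the \xdk{(g+1)} instance respects the capacity bound in each of the $g+1$ dimensions, the returned item set maps back to a set of projects that respects every group budget $b(F_i)$ as well as the global budget $B$; hence it is a feasible \mgpb solution. Its total utility is at least $\mathrm{OPT}/(d+1) = \mathrm{OPT}/(g+2)$, which is precisely a $(g+2)$-approximation in the sense defined at the start of this section.

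Since this is a corollary, there is no genuine obstacle; the only point that really needs care is the bookkeeping on the dimension count. The relevant number of knapsack constraints is $g+1$, not $g$, because the global budget contributes one extra dimension, so the $(d+1)$-factor becomes $g+2$ rather than $g+1$. I would additionally check that the maximization variant of \ddk used by \cite[Lemma 9.4.2]{knapsackbook} matches our objective (maximize total approval score subject to all capacity constraints), so that the guarantee transfers verbatim; as \mgpb in its optimization form carries no required-utility constraint, this alignment is immediate.
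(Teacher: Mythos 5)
Your proposal is correct and follows exactly the paper's route: view the \mgpb instance as an instance of \xdk{(g+1)} via the reduction described in the Related Work subsection, and apply the polynomial-time $(d+1)$-approximation for \ddk with $d=g+1$, yielding the factor $g+2$. Your explicit bookkeeping that the global budget adds the extra dimension (so the factor is $g+2$ rather than $g+1$) is precisely the point the corollary rests on.
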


While seemingly very weak, this approximation guarantee is almost tight.

\begin{theorem}\label{thm:sqrt-g-apx}
 For any $\epsilon>0$, there does not exist a polynomial-time $(g^{1/2-\epsilon})$-approximation algorithm for \mgpb unless $\P=\NP$ even if $g \leq m^2$.
\end{theorem}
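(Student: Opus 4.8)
The plan is to give an approximation-preserving reduction from \maxis, which---by the (derandomized) inapproximability of clique/independent set due to H\aa{}stad and Zuckerman---admits no polynomial-time $|V|^{1-\epsilon'}$-approximation for any $\epsilon'>0$ unless $\P=\NP$. I would reuse the gadget from the proof of Theorem~\ref{thm:paraNP-l+s+f}, tuned so that the utility of a feasible solution equals exactly the size of the corresponding independent set. Given a graph $G=(V,E)$, I create one project $p_x$ of cost $1$ per vertex $x\in V$ and a single voter who approves all projects, so every project has approval score $1$ and the utility of any bundle $X$ is exactly $|X|$. For each edge $e=xy\in E$ I add the group $\{p_x,p_y\}$ with budget $b(\{p_x,p_y\})=1$, and I set the (non-binding) global budget $B=|V|$.

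Since both projects of an edge-group cost $1$ while the group budget is $1$, a bundle is feasible precisely when it contains at most one endpoint of every edge, i.e.\ when $\{x : p_x\in X\}$ is an independent set of $G$; moreover its utility equals $|X|$. Hence feasible bundles and independent sets are in an objective-preserving bijection, and the optimum utility of the \mgpb instance equals $\alpha(G)$. Here $m=|V|$ and $g=|E|\le\binom{|V|}{2}<|V|^2=m^2$, so the constructed instances satisfy the promised restriction $g\le m^2$, and thus the hardness holds already on this subclass.

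For the inapproximability itself, I would argue by contradiction: suppose some polynomial-time algorithm $\mathcal{A}$ achieves approximation factor $g^{1/2-\epsilon}$ for \mgpb, for a fixed $\epsilon\in(0,1/2)$. Running $\mathcal{A}$ on the constructed instance returns a feasible bundle, hence an independent set, of size at least $\alpha(G)/g^{1/2-\epsilon}$. Using $g\le|V|^2$ we obtain $g^{1/2-\epsilon}\le|V|^{1-2\epsilon}$, so $\mathcal{A}$ would constitute a polynomial-time $|V|^{1-2\epsilon}$-approximation for \maxis. Setting $\epsilon'=2\epsilon\in(0,1)$, this contradicts the $|V|^{1-\epsilon'}$-inapproximability of \maxis unless $\P=\NP$, which finishes the proof.

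The one delicate point is the bookkeeping of the two approximation factors and the exponent conversion: the square root in $g^{1/2-\epsilon}$ must be matched against the quadratic relation $g\le m^2=|V|^2$ so that it collapses exactly to the near-linear $|V|^{1-\epsilon'}$ independent-set threshold. Everything else---the equivalence between feasibility and independence, and the exact utility/size correspondence---is immediate from the cost-$1$, budget-$1$ edge gadgets, so the main work is simply verifying that the parameter relation $g\le m^2$ makes the $\sqrt{g}$ bound tight against the known hardness. This also explains why the guarantee is ``almost tight'', sitting between the $\sqrt{g}$ lower bound here and the $(g+2)$-approximation of Corollary~\ref{cor:g-plus-2-apx}.
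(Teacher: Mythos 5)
Your proposal is correct and follows essentially the same route as the paper: an approximation-preserving reduction from \maxis using the unit-cost, unit-budget edge-group gadget of Theorems~\ref{thm:paraNP-l+s+f}/\ref{thm:w-hard-app2} with a non-binding global budget $B=m$, followed by the same exponent conversion $g\le|V|^2$, $g^{1/2-\epsilon}\le|V|^{1-2\epsilon}$ against Zuckerman's $|V|^{1-\epsilon'}$-inapproximability. The only (immaterial) difference is that you use a single voter approving all projects rather than one voter per vertex, which is the other variant the paper itself employs.
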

\begin{proof}

 We rely on the inapproximability of \maxis (\maxissmall):
 for any $\epsilon'>0$, \maxissmall is $\NP$-hard to approximate to within $|V|^{1-\epsilon'}$ factor~\cite{Zuckerman07}.
 Fix $\epsilon>0$, and use the reduction from Theorem~\ref{thm:w-hard-app2} with setting the global budget to be $B=m$.
 This reduction is preserving the approximation factor, i.e.,
 an $\alpha$-approximate solution to the resulting \mgpb instance is also an $\alpha$-approximate solution to \maxis.
 This is so as any feasible solution to \mgpb of total utility $u$ contains exactly $u$ projects that correspond to an $u$-element subset of $V$,
 which is an independent set
 (from every edge we choose at most one vertex, as otherwise, when choosing both endpoints, we would violate a group budget constraint).
 Assume there exists a $(g^{1/2-\epsilon})$-approximation algorithm for \mgpb.
 Then, we also have $(g^{1/2-\epsilon})$-approximation algorithm for \maxissmall.
 We have $g=|E| \leq |V|^2$, hence
 $g^{1/2-\epsilon} = g^{(1-2\epsilon)/2} \leq |V|^{1-2\epsilon}$, which means that
 we achieved $|V|^{1-\epsilon'}$-approximation algorithm for \maxissmall for $\epsilon' = 2\epsilon>0$,
 in contradiction to the hardness of approximation of \maxissmall~\cite{Zuckerman07}.
\end{proof}

As we have $m = |V|$ in the proof above, we can also derive $(m^{1-\epsilon})$-hardness of approximation (unless $\P=\NP$).
One may consider that \mgpb is easier if $g = o(m^2)$, but even if $g$ is linear on $m$ we can exclude a PTAS:
\begin{theorem}\label{thm:no-ptas-g-linear-on-m}
 Assuming $\P \neq \NP$, there does not exist a PTAS for \mgpb, even if $g \leq \frac{3}{2}m$.
\end{theorem}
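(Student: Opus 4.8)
The plan is to reuse the approximation-preserving reduction from \is already employed in the proof of Theorem~\ref{thm:sqrt-g-apx} (which itself comes from Theorem~\ref{thm:w-hard-app2}), but to feed it a sparse graph class on which \maxis is already known to admit no PTAS. Recall the structure of that reduction: each vertex $x$ yields a project $p_x$ of cost $1$ approved by a single dedicated voter, and each edge $xy$ yields a group $\{p_x,p_y\}$ with budget $b(\{p_x,p_y\})=1$. Since all costs are $1$ and each edge-group has budget $1$, every feasible solution selects at most one endpoint of each edge, so feasible solutions of utility $u$ correspond exactly to independent sets of size $u$, and vice versa. Hence any $\alpha$-approximation for \mgpb yields an $\alpha$-approximation for \maxis on the input graph, while the number of groups is $g=|E|$ and the number of projects is $m=|V|$.

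The key idea is to take the input graph to be $3$-regular (cubic). For a cubic graph on $n$ vertices we have $2|E|=\sum_{x}\deg(x)=3n$, so $|E|=\frac{3}{2}n$; translating through the reduction gives $g=|E|=\frac{3}{2}|V|=\frac{3}{2}m$, exactly the claimed bound. It therefore suffices to invoke the fact that \maxis admits no PTAS on cubic graphs unless $\P=\NP$: indeed, \maxis is APX-hard on $3$-regular graphs, with explicit inapproximability constants available in the same source used for $\NP$-completeness in Theorem~\ref{thm:paraNP-l+s+f}~\cite{chlebik2003approximation}.

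Combining these, suppose for contradiction that \mgpb admitted a PTAS on instances with $g\le\frac{3}{2}m$. Applying it to the instances produced by the reduction above from cubic graphs would give, for every fixed $\epsilon>0$, a polynomial-time $(1+\epsilon)$-approximation for \maxis on cubic graphs, contradicting the APX-hardness of \maxis on this class. Thus no such PTAS exists unless $\P=\NP$.

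The main obstacle is purely in sourcing the correct sparse inapproximability result: we need a graph class that simultaneously (i) forbids a PTAS for \maxis and (ii) has edge-to-vertex ratio at most $\frac{3}{2}$. Cubic graphs meet both conditions at once, so the only delicate point is citing a \emph{hardness-of-approximation} statement (rather than mere $\NP$-completeness) for \maxis restricted to $3$-regular graphs; once that is in place, everything else is a direct transcription of the approximation-preserving reduction already established.
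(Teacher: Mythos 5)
Your proposal is correct and follows essentially the same route as the paper: an approximation-preserving reduction from \maxis restricted to $3$-regular ($3$-edge-colorable) graphs, using the $\APX$-hardness from~\cite{chlebik2003approximation} and the identity $g=|E|=\frac{3}{2}|V|=\frac{3}{2}m$ for cubic graphs. The only cosmetic difference is that you instantiate the voter gadget from Theorem~\ref{thm:w-hard-app2} (one voter per vertex) while the paper reuses the one from Theorem~\ref{thm:paraNP-l+s+f} (two voters per edge, with global budget reset to $B=m$); both variants preserve the approximation factor identically, so this changes nothing.
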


\begin{proof}
 We will use $\APX$-completeness of \maxissmall on 3-regular 3-edge-colorable graphs~\cite{chlebik2003approximation}
 and the reduction from Theorem~\ref{thm:paraNP-l+s+f}
 in which we are changing a global budget into $B=m$.
 Analogously to the proof of Theorem~\ref{thm:sqrt-g-apx}
 this reduction is approximation preserving.
 It means that \mgpb is $\APX$-hard, which excludes a PTAS assuming $\P \neq \NP$.
 Additionally, in the reduction we have $g=|E|=\frac{3}{2}|V|=\frac{3}{2}m$ because the input graph is 3-regular.
\end{proof}

What happens for $g = o(m)$? Here is a partial answer that uses our FPT approximation scheme wrt.\ $g$ (Theorem~\ref{thm:fpt-as-simpler}).

\begin{theorem}\label{thm:ptas-for-g-loglog-i}
 Let $|\calI|$ be the size of an instance $\calI$ of \mgpb.
    There exists a PTAS for \mgpb if $g \leq \log_4\log(|\calI|^{\Oh(1)})$.
\end{theorem}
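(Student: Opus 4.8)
The plan is to observe that this statement is essentially a direct consequence of the FPT approximation scheme of Theorem~\ref{thm:fpt-as-simpler} together with its explicit running-time bound, specialized to the regime of $g$ named in the hypothesis. First I would recall that, for any fixed $\epsilon > 0$, Theorem~\ref{thm:fpt-as-simpler} produces a $(1+\epsilon)$-approximate solution to \mgpb, and that by Lemma~\ref{lem:fpt-as-time-simpler} the dominant term of its running time is $\Ohstar(2^{\frac{1}{2} \cdot 4^g})$, where the hidden polynomial factor in the input size (and the whole dependence on the fixed $\epsilon$) is absorbed into the $\Ohstar$ notation.

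The key step is then to check that the exponential factor $2^{\frac{1}{2}\cdot 4^g}$ collapses to a polynomial in $|\calI|$ precisely under the stated bound $g \leq \log_4\log(|\calI|^{\Oh(1)})$. Writing the hypothesis as $g \leq \log_4(c\log|\calI|)$ for the relevant constant $c$ hidden in the $\Oh(1)$, I would exponentiate to get $4^g \leq c\log|\calI|$, so that
\begin{align*}
2^{\frac{1}{2}\cdot 4^g} \;\leq\; 2^{\frac{c}{2}\log|\calI|} \;=\; |\calI|^{c/2}.
\end{align*}
Hence $\Ohstar(2^{\frac{1}{2}\cdot 4^g}) = \Ohstar(|\calI|^{c/2})$ is polynomial in the input size.

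Putting the two observations together, for every fixed $\epsilon > 0$ the algorithm of Theorem~\ref{thm:fpt-as-simpler} runs in polynomial time on all instances satisfying $g \leq \log_4\log(|\calI|^{\Oh(1)})$, while still guaranteeing a $(1+\epsilon)$-approximate solution; by definition this is exactly a PTAS for \mgpb restricted to this class of instances, which is what we want. I do not anticipate a genuine obstacle here, since all the technical content (the reduction to one project per type, the bucketing of approval scores, and the brute-force enumeration) was already carried out in the proof of Theorem~\ref{thm:fpt-as-simpler}. The only point requiring care is the bookkeeping of constants: one must match the $\Oh(1)$ appearing inside $\log_4\log(\cdot)$ with the constant governing the polynomial slack in Lemma~\ref{lem:fpt-as-time-simpler}'s running time, so that the chain of inequalities above indeed forces $2^{\frac{1}{2}\cdot 4^g}$ to remain polynomial in $|\calI|$.
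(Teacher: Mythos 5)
Your proposal is correct and follows exactly the paper's own argument: invoke the $\Ohstar(2^{\frac{1}{2}\cdot 4^g})$ bound from Lemma~\ref{lem:fpt-as-time-simpler} and observe that under $g \leq \log_4\log(|\calI|^{\Oh(1)})$ the exponential term becomes $2^{\frac{1}{2}\log(|\calI|^{\Oh(1)})}$, i.e., polynomial in $|\calI|$, yielding a PTAS for fixed $\epsilon$. The only difference is presentational (you make the hidden constant $c$ explicit), which is fine.
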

\begin{proof}
 For a fixed $\epsilon>0$ the running time of the FPT approximation scheme is bounded by
 $\Ohstar(2^{\frac{1}{2} \cdot 4^g})$ (see Lemma~\ref{lem:fpt-as-time-simpler}).
 Hence, putting there $g \leq \log_4\log(|\calI|^{\Oh(1)})$ we get a following bound on the running time:
 \begin{align*}
  \Ohstar(2^{\frac{1}{2} \cdot 4^{\log_4\log(|\calI|^{\Oh(1)})}}) = \Ohstar(2^{\frac{1}{2} \cdot \log(|\calI|^{\Oh(1)})}) =
  \Ohstar(|\calI|^{\Oh(1)}) = \Ohstar(1),
 \end{align*}
 that is polynomial for a fixed $\epsilon$ hence we get a PTAS.
\end{proof}

As $m$ is bounded by the input size we get the following.

\begin{corollary}\label{cor:ptas-for-g-loglogm}
 There exists a PTAS for \mgpb if $g \leq \log_4\log(m^{\Oh(1)})$.
\end{corollary}

Our approximability and inapproximability results are summarized in Tables~\ref{tab:apx-results} and \ref{tab:inapx-results}, respectively.

\section{Outlook}\label{sec:outlook}

Motivated by PB scenarios in which it is useful to consider geographic constraints and thematic constraints, we enriched the standard approval-based model of PB
(in particular, the model of \emph{Combinatorial PB}~\cite{aziz2021participatory}),
by introducing a group structure over the projects and requiring group-specific budget limits for each group.

We have showed that, while being computationally intractable in general, the enriching PB instances with such group structure and its corresponding budget constraints comes at essentially no computational cost
if there are not so many such groups or if the structure of these groups is hierarchical or close to being such.
We complemented our analysis with lower bounds and approximation algorithms.

Practically, while our focus is on a theoretical understanding of the combinatorial structure of our problems,
some of our results are already showing efficient complexity and thus can be used practically as they are.
Other results giving an evidence of being in $\FPT$ or $\XP$, while having rather high complexity (e.g., double exponential dependency on the relevant parameter),
show nevertheless that efficient algorithms may exist,
thus more research might be instructive in finding algorithms with even better running time.

Some future research directions are the following:
\begin{enumerate}

\item
\textbf{Complexity wrt.\ $g$:}
We have one main open question remaining, namely, the parameterized complexity of \gpb wrt.\ $g$, that is worth settling.

\item
\textbf{Further parameters:}
Considering more parameters, including parameter combinations, would be natural for future research.
Furthermore, our reduction from \is in Theorem~\ref{thm:paraNP-l+s+f} can give also time lower-bounds under the Exponential Time Hypothesis
(see, e.g., a survey by Lokshtanov \textit{et al.}~\cite{LokshtanovMS11}).
Generally speaking, it would be interesting to evaluate what are limitations of parameterized algorithms for \gpb
(in particular with respect to parameter $g$)
and how our current results are close to such lower bounds.

\item
\textbf{Different aggregation methods:}
Another direction would be to consider other functions of aggregating utilities of voters.
For example, instead of maximizing the \emph{sum} over voter utilities, one might be egalitarian and aim at maximizing the \emph{minimum} over voter utilities.

\item
\textbf{Cardinality constraints:}
Another direction would be to consider additional cardinality constraints.
While all hardness results hold for the more general variant where we consider both knapsack \emph{and} cardinality constraints,
a careful analysis shall be taken to explore which of the tractability results hold for this setting as well.

\item
\textbf{Project interactions:}
Another direction would be to consider interactions among projects, such as substitution and complementarities.
These can be modeled via general functions, corresponding to the utility of voters from a certain number of projects funded and approved from each group,
in the spirit of the work of Jain \textit{et al.}~\cite{pbsub}.
\end{enumerate}

\section*{In Memory of Rolf Niedermeier}

Nimrod would like to speak for the co-authors regarding Rolf:
First of all, as my PhD supervisor, Rolf had tremendous influence on my scientific career.
Perhaps one of the most magical features of Rolf was his ability to connect people.
In a way, this paper is a result of this ability.
In particular, for my PhD defense, Rolf invited Hadas Shachnai (from the Technion) to serve in my PhD committee. Rolf mentioned that Hadas was at the time the PhD advisor of Meirav Zehavi, and he told me that I should perhaps contact Meirav as she is ``doing good research''. 
Well, it took me some years, but eventually I contacted Meirav Zehavi and this paper is the result of this connection.

\section*{Acknowledgements}

We would like to thank the anonymous reviewers for their helpful comments.

Pallavi Jain was supported by IIT Jodhpur seed grant (Grant No. I/SEED/PJ/20210119) and SUPRA grant by SERB, India (Grant No. SPR/2021/000860).
Krzysztof Sornat was partially supported by the European Research Council (ERC) under the European Union’s Horizon 2020 research and innovation programme (grant agreement No. 101002854) and
by the SNSF Grant 200021\_200731/1.
Nimrod Talmon was supported by the Israel Science Foundation (ISF; Grant No. 630/19).
Meirav Zehavi was supported by the European Research Council (ERC) grant titled PARAPATH (grant No. 101039913).

\bibliographystyle{alpha}
\bibliography{bib}

\end{document}